\documentclass[12pt,a4paper,onecolumn]{ieeecolor}
\input{generic.sty}
\usepackage{cite}
\usepackage{amsmath,amssymb,amsfonts}
\usepackage{algorithmic}
\usepackage{graphicx}
\usepackage[a4paper,margin=1in]{geometry}
\usepackage{textcomp}
\usepackage{siunitx}  
\usepackage{comment}
\usepackage{mathtools}
\usepackage{xcolor}
\usepackage[caption=false]{subfig}
\usepackage{float}
  \usepackage{setspace}
\usepackage{hyperref}
\usepackage{float}
\usepackage{multirow}
\usepackage{cases}
\usepackage{empheq}
\hypersetup{colorlinks=false}
\hypersetup{linkcolor=black, citecolor=black, urlcolor=black}                                              
\usepackage{bm}
\usepackage[utf8]{inputenc}

\def\BibTeX{{\rm B\kern-.05em{\sc i\kern-.025em b}\kern-.08em
    T\kern-.1667em\lower.7ex\hbox{E}\kern-.125emX}}
\begin{document}
\newtheorem{theorem}{\textbf{Theorem}}
\newtheorem{remark}{\textbf{Remark}}
\newtheorem{lemma}{\textbf{Lemma}}
\newtheorem{proposition}{\textbf{Proposition}}
\newtheorem{corollary}{Corollary}

\title{Research note Twente}
\singlespacing

\begin{center}{\large \textbf{ Multi-Period Repetitive Control Design in a Time Delay Framework with Application to an Active Vibration Isolation System}}
\end{center}
\begin{center}
Xiaoran Han\footnote{Xiaoran Han is with Robotics for Extreme Environments, Department of Electrical and Electronic Engineering, University of Manchester, Manchester, M13 9PL, UK (email: xiaoran.han@manchester.ac.uk)}, Sil Spanjer and Wouter B.J. Hakvoort\footnote{Sil Spanjer and Wouter B.J. Hakvoort are with Faculty of Engineering Technology, Precision Engineering, Twente University, Enschede, 7522 NB, Netherlands (email: s.t.spanjer@gmail.com), (email: w.b.j.hakvoort@utwente.nl)}
\end{center}

\section{\textbf{Abstract}}
Multi-period Repetitive Control (MPRC) can lead to amplification at non-repetitive frequencies due to the multiplication of multi-period repetitive control in their closed-loop sensitivity functions, which leads to closed-to imaginary poles. We propose a MPRC structure that eliminates such interaction.
The scheme can be equivalently converted into a disturbance observer to improve robustness to frequency uncertainties. We derive an upper bound on the sensitivity function that can be optimized at either repetitive or non-repetitive frequencies. Experimental results on an active vibration isolation system show superior performance over existing robust MPRC schemes, and the capability to suppress unknown time-varying periodic disturbances.

\section{\textbf{Introduction}}
Repetitive Control (RC), which employs the internal model principle, 
is well known for its superior performance in rejecting periodic disturbances and their harmonics that occur at a single, known fundamental frequency \cite{Tomizuka}, \cite{Hara}, \cite{Tadashi}.  
It has been applied in various areas, such as in motion control systems, particularly in precision positioning applications such as robotics \cite{Kazumasa}, CNC machines, and servo systems \cite{Huixing},
power inverters \cite{Tianqi}, periodic wind rejection \cite{Ivo},  tracking control of a CD-player \cite{Steinbuch}, 
vibration isolation \cite{Daley} or tremor suppression \cite{Bing1}.
RC assumes the system has a relative degree of zero \cite{Steinbuch}, i.e., the output measurements directly depend on the inputs without any differentiation. Based on this,
signals are constructed utilizing both the control and measurement signals that were sampled at a previous $\tau$ period.
 The signals are then fed forward, along with a nominal feedback controller, to the plant that experiences external periodic disturbances. The disturbances can be fully compensated by the signals if the exact value of $\tau$, which is the period of the disturbances, is known in advance \cite{Tadashi}.

In the practical implementation of RC, designers are expected to address the following three main limitations.
First of all,  the exact frequency of disturbance must be known in advance. When the exact frequency is unknown, estimation techniques such as the disturbance observer are required for estimating the unknown periodic disturbance \cite{Tsao}, \cite{Bodson}, \cite{Bodson2}, \cite{Min Wu2}.
Secondly, RC is highly sensitive to frequency uncertainties, even when the disturbance frequency is known. Its performance 
deteriorates rapidly in the presence of even slight variations in disturbance frequency. 
Addressing this, \cite{Steinbuch},   \cite{Inoue}, \cite{Chang1}  proposed a high-order RC scheme for improving RC robustness to frequency uncertainties. The reduced high sensitivity, however, comes at the cost of 
large disturbance amplification at the intermediate frequencies between each harmonic.
 Studies are carried out using linear programming \cite{Steinbuch2},   semi-definite programming \cite{Pipeleers}, $H_\infty$ and LQ design \cite{Min Wu}, \cite{Koroglu},    to yield an optimal trade-off between robustness for changes in the period time and for reduction of the spectrum in-between the harmonic frequencies. 
 Thirdly, unlike the celebrated success of single-period RC in various industrial domains, multi-period RC (MPRC) does not receive as wide acceptance. 
There are mainly two  MPRC schemes available, i.e., parallel structure \cite{W.Chang}, \cite{Owens} or cascaded structure \cite{M.Yamada}, \cite{Arancibia}, \cite{Winarto}, \cite{Blanken}.
Due to the interaction within MPRC,
the parallel structure results in severe amplifications at specific frequencies where harmonics from different RCs constructively interfere. 
In the face of this limitation,  cascaded MPRC design assumes the exact knowledge of the nominal closed-loop dynamics is available or can be learned, and presumes implementing a high nominal loop gain control, or a unity complementary sensitivity function to suppress the interactions.
However, there is a lack of rigorous stability guarantees for implementing high nominal loop gain control and applying high RC gain without violating stability.
 Last but not least, RC has been widely regarded as a plug-in controller, alongside the nominal closed-loop control. 
The controlled signal from the RC applies to both the disturbance and the nominal control loop.
The interaction between the RC and the nominal controller sometimes prevents the use of higher RC gains, potentially rendering MPRC ineffective. 

Simultaneously, at about the same time when RC was originated \cite{Tomizuka}, \cite{Hara}, \cite{Tadashi},  another disturbance compensation scheme, namely Time Delay Control (TDC) \cite{Hsia}, \cite{Youcef}, originated for unknown random disturbances.
 The two schemes share a similar concept in their disturbance compensation techniques, which utilize delays in the control signals. The RC utilizes the past $\tau$ samples of control inputs and measured outputs, where $\tau$ denotes the period of the external disturbance. On the other hand, TDC requires previous samples of the control inputs and the measured outputs, with $t_s$ denoting the sampling period of the digital control system. 
While such equivalent disturbance reconstruction allows a system to be robust to a specified disturbance,
the design of the past control signals in generating the present control signals, however, requires  more attention, as the control signal risks being integrated due to time-delayed compensation
of the disturbance.

In this paper, we propose a new MPRC configuration inspired by the RC and the TDC. It is shown
that the high-order RC proposed in \cite{Steinbuch} is a special case of the proposed MPRC scheme. The inclusion of TDC in RC improves robustness to frequency uncertainties and provides an estimate of unknown periodic disturbances.
The MPRC interactions, which are detrimental in parallel or cascaded RC and exhibit significant amplification at the interacting frequencies, are not present in the proposed approach.  We note the proposed approach does not assume a unity complementary sensitivity function in order to suppress the multi-frequency interactions.
Inspired by the proposed MPRC structure, it is shown that the conventional cascaded MPRC can be
modified to share essentially the same feature of the proposed MPRC but without the plant model being known. 
Delay-dependent input-to-state stability conditions based on Linear Matrix Inequalities (LMIs) are derived, which allow optimization of RC parameters for a given frequency, even if it is not a repetitive frequency.
The proposed RC is experimentally validated on a six-axis, hard-mounted Active Vibration Isolation System to demonstrate its efficiency. 

\textbf{Notation} Through out the paper, time dependent variable $x(t)$ is simply denoted as $x$, otherwise it is denoted as $x(t-\tau)$ if it is delayed by a period of $\tau$. 
 Matrix $P>0$ means it is symmetric and positive definite. 

\section{\textbf{System description}}

Consider the following mass-spring-damper model
\begin{equation}\label{linearSys}
\begin{array}{l}
\dot x_1(t)=x_2(t),\\
m\dot x_2(t)=-kx_1(t)-dx_2(t)+F_1(t)+F_d (\omega_{d_1},\,\ldots,\,\omega_{d_{n-1}},\,\omega_{d_n}),
\end{array}
\end{equation}
where $x_1(t),\, x_2(t),\, \dot x_2(t)\in \mathbb{R}$ denotes  position, velocity and acceleration. We consider the case where the acceleration $\dot x_2$ is measured, as in our setup. The velocity $x_2$ can either be measured or estimated from the acceleration signal using a low-pass filter or a weak integrator with an appropriate cut-off frequency, e.g., 1 rad/s for noise filtering. In our controller formulation, we provide full-state feedback with $x_1$ measured.  For a stable system like (\ref{linearSys}), it is, however, not a necessary condition for the feasibility of our stability condition.  
$m$, $k$, $d$, $F_1(t)$ denote the mass, stiffness, damping, and control input, respectively. $F_d (\omega_{d_1},\,\ldots,\,\omega_{d_{n-1}},\,\omega_{d_n})=\sum_{i=1}^nF_{d_i}(\omega_{d_i}) $ denotes disturbances, composed of multiple fundamental frequencies $\omega_{d_i}$ where each frequency component is bounded by $|F_d(\omega_{d_i})|\leq \Delta_i$, where $\Delta_i\geq 0$.
We consider a scalar system for simplicity in demonstrating our result.
Nevertheless, higher-dimensional systems can also be considered after modal decomposition.  
 As in robotic systems \cite{Kazumasa}, the mass matrix and the inertia tensor depend on the joint angles and velocities; we consider a time-varying or uncertain mass quantity $m=\overline m+\Delta m$, where $\overline m$ and $\Delta m$ denote the known and uncertain parts, respectively. 
 Denoting $\phi_1(t)=-\Delta m\dot x_2(t)-kx_1(t)-dx_2(t)+F_d (\omega_{d_1},\,\ldots,\,\omega_{d_{n-1}},\,\omega_{d_n})$ as a total disturbance, 
 $\dot x_2$ in (\ref{linearSys}) can be rewritten as 
\begin{equation}\label{s1}
\overline m\dot x_2(t)=\phi_1(t)+F_1(t).
\end{equation}
The stiffness $k$ and damping $d$ are denoted as part of the total disturbance $\phi$ for now; however, this does not restrict us from deriving a transfer function of the closed-loop system acceleration $\dot x _ {2} $ with respect to the actual external disturbance $F_d$.
For a system subject to external periodic disturbance,
RC produces 
a control signal that is the sum of the previous $\tau$ sample of the control signal and the previous $\tau$ sample of the output signal that is directly a function of the input without differentiation, i.e., a relative degree zero system \cite{Steinbuch} . $\tau$ denotes the period of the external periodic disturbance whose fundamental frequency is $\omega=\frac{2\pi}{\tau}$.

Inspired by single-period RC, we propose a multi-period RC that is designed by sequentially adding single-period RC controllers, each designed for a disturbance with period $\tau_i$. This leads to
\begin{equation}\label{s2_time}
\begin{array}{l}
F_i(t)=Q_i\lambda_i (F_i(t-\tau_i)-\overline m\dot x_2(t-\tau_i))+F_{i+1}(t), \textrm{\,\,\,for\,\,\,} i=1,\ldots, n-1,\\
F_n(t)=Q_n\lambda_n(F_n (t-\tau_n)-\overline m\dot x_2(t-\tau_n))-px_1(t)-gx_2(t),
\end{array}
\end{equation}
where $Q_i$ is a low-pass or band-pass filter,  $\lambda_i\geq 0$ denotes an RC gain and
   $p,\,g\geq 0$ are some positive nominal control parameters. We acknowledge that translation of the filters in (\ref{s2_time}) into the time domain requires careful numerical implementation due to the combined effects of filtering and time delays. It is possible to implement the filtering using a discrete-time version of the controller, such as in dSPACE, in our setup.

In the frequency domain, this can be expressed as 
\begin{equation}\label{s2}
\begin{array}{l}
F_i(s)=Q_i(s)\lambda_i e^{-s\tau_i}  (F_i(s)-\overline m s X_2(s))+F_{i+1}(s), \textrm{\,\,\,for\,\,\,} i=1,\ldots, n-1,\\
F_n(s)=Q_n(s)\lambda_ne^{-s\tau_n} (F_n (s)-\overline m s X_2(s))-CsX_2(s),
\end{array}
\end{equation}
where $C=\frac{gs+p}{s^2}$ denotes the nominal control function. As a result, a schematic of the nominal system (\ref{linearSys}) with the proposed multi-period RC in (\ref{s2})   is shown in Fig. \ref{fig:2H}, where  $G=\frac{s^2}{ms^2+ds+k}$ denotes the nominal plant.
\begin{figure}
\centering
  \includegraphics[width=1\linewidth]{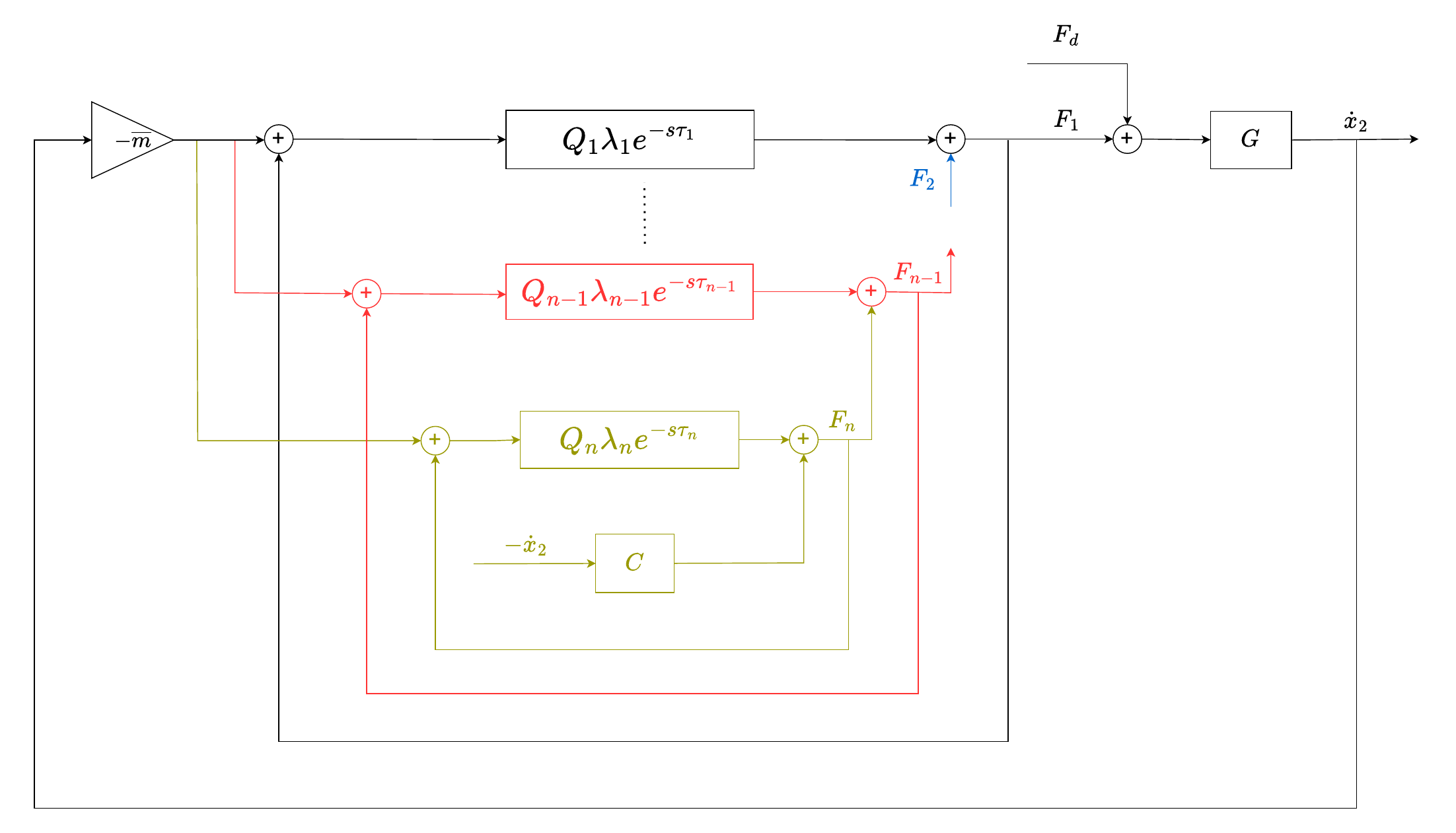}
  \caption{Proposed multi-period repetitive control design}
  \label{fig:2H}
\end{figure}
Rearranging system (\ref{s1}), we can write the total disturbance in terms of the control signal and the controlled signal as $-\phi_1(t)=F_1(t)-\overline m\dot x_2(t)$. Control signal (\ref{s2_time}) can be equivalently interpreted as compensating the total disturbance by its previous $\tau$ value as $F_1(t)=-Q_1\lambda_1\phi_1(t-\tau_1)+F_2(t)$. Substituting $F_1(t)$ in this form,  system (\ref{s1}) can be expressed as $\overline m\dot x_2(t)=\phi_1(t)-Q_1\lambda_1\phi_1(t-\tau_1)+F_2(t)=\phi_2(t)+F_2(t) $, where we  denote $\phi_{i+1}(t)=\phi_i(t)-Q_i\lambda_i\phi_i (t-\tau_i)$  for $i=1,\dots, n-1$. Then we can write  $-\phi_2(t)=F_2(t)-\overline m\dot x_2(t)$ 
and    
 $F_2(t)=Q_2\lambda_2(F_2(t-\tau_2)-\overline m\dot x_2(t-\tau_2))+F_3(t)=
 -Q_2\lambda_2\phi_2(t-\tau_2)+F_3(t) $. Expanding this to $i=n$, and denoting  $F_R$ as the repetitive control and $F_N$ as the nominal control,  the controller (\ref{s2_time}) can be interpreted equivalently in terms of the total disturbance and the nominal control as
\begin{equation}\label{F1_phi}
F_1(t)= \underbrace{\sum_{i=1}^nQ_i\lambda_i (F_i(t-\tau_i)-\overline m\dot x_2(t-\tau_i) )}_{F_R(t)}\underbrace{-C\dot x_2(t)}_{F_N(t)}=
-\sum_{i=1}^nQ_i\lambda_i\phi_i(t-\tau_i)-C\dot x_2(t).
\end{equation}
Note that the controller expression in (\ref{F1_phi}) is a mathematical expression associated with the total disturbance $\phi_i$. It is not implemented based on the total disturbance $\phi_i$; its actual implementation is given in (\ref{s2_time}).
We can write (\ref{F1_phi}) in Laplace form in terms of $\Phi_1(s)$. Take $n=2$ for example, substituting $F_2(t)$ into $F_1(t)$ in terms of the total disturbance we can write $F_1(t)=-Q_1\lambda_1\phi_1(t-\tau_1)-Q_2\lambda_2\phi_2(t-\tau_2)+F_3(t)=  -Q_1\lambda_1\phi_1(t-\tau_1)-Q_2\lambda_2(\phi_1(t-\tau_2)-Q_1\lambda_1\phi_1(t-\tau_1-\tau_2) )+F_3(t)$. In Laplace form, this becomes
$F_1(s)
=(-1+(1-Q_1\lambda_1e^{-s\tau_1})(1-Q_2\lambda_2e^{-s\tau_2}))\Phi_1(s)+F_3(s)
$. For $n=1$, this can be written as $F_1(s)=\big(-1+\prod_{i=1}^n(1-Q_i\lambda_ie^{-s\tau_i})(1-Q_{n+1}\lambda_{n+1}e^{-s\tau_{n+1}})\big)\Phi_1(s)+F_3(s)=(-1+\prod_{i=1}^{n+1}(1-Q_i\lambda_ie^{-s\tau_i}))\Phi_1(s)+F_3(s)
$.  Hence, it follows  for $i=n$
\begin{equation}\begin{split}\label{s_22}
F_1(s)=F_R(s)+F_N(s)=\big(-1+\prod_{i=1}^n(1-Q_i\lambda_ie^{-s\tau_i})\big)\Phi_1(s)-CsX_2(s).
\end{split}\end{equation}
Substituting the  controller expression (\ref{s_22}) into (\ref{s1}) yields the closed-loop system 
\begin{equation}\label{s6_1}
\overline m s X_2(s)=\prod_{i=1}^n(1-Q_i\lambda_ie^{-s\tau_i})\Phi_1(s)-Cs X_2(s).
\end{equation}
Expanding $\Phi_1(s)=-\Delta m s X_2(s)-kX_1(s)-dX_2(s)+F_d(s)$ in (\ref{s6_1}) and then substituting the nominal control $C$, the transfer function from $F_d(s)$ to $s X_2(s)$ becomes
\begin{equation}\label{Fd_x2}
\resizebox{0.98\linewidth}{!}{$
\begin{split}
&sX_2(s)=\frac{\prod_{i=1}^n(1-Q_i\lambda_ie^{-s\tau_i})
}{\overline m+\prod_{i=1}^n(1-Q_i\lambda_ie^{-s\tau_i})(\Delta m+\frac{d}{s}+\frac{k}{s^2})+C
}F_d(s)\\
&=\frac{\prod_{i=1}^n(1-Q_i\lambda_ie^{-s\tau_i})s^2
}{(\overline m+\prod_{i=1}^n(1-Q_i\lambda_ie^{-s\tau_i})\Delta m)s^2+(\prod_{i=1}^n(1-Q_i\lambda_ie^{-s\tau_i})d+g)s+(\prod_{i=1}^n(1-Q_i\lambda_ie^{-s\tau_i})k+p)
}F_d(s)
\end{split}
$}
\end{equation}
We can rewrite (\ref{Fd_x2}) as:
\begin{equation}\label{small gain}
sX_2(s)=\frac{ \frac{\prod_{i=1}^n(1-Q_i\lambda_ie^{-s\tau_i})}{\overline ms^2+gs+p}s^2
}{1+\frac{\prod_{i=1}^n(1-Q_i\lambda_ie^{-s\tau_i})}{\overline ms^2+gs+p}(\Delta ms^2+ds+k)}
F_d(s)
\end{equation}
A sufficient condition by Small Gain Theorem \cite{Zhou} for stability of (\ref{small gain}) is 
\begin{equation}
\left|\frac{\prod_{i=1}^n(1-Q_i\lambda_ie^{-j\omega\tau_i})(\Delta m(j\omega)^2+d(j\omega)+k)}{\overline{m}(j\omega)^2+g(j\omega)+p}\right|_\infty<1
\end{equation}
for all frequencies $\omega$.

Rearranging the total disturbance $\Phi_1(s)=\overline ms X_2(s)-F_1(s)$ and substituting it into (\ref{s_22}),
we derive the transfer function from  $s X_2(s)$ to $F_1(s)$ as
\begin{equation}\label{s7}
\begin{split}
-\frac{F_1(s)}{s X_2(s)}=\frac{\overline m\big(1-\prod_{i=1}^n(1-Q_i\lambda_ie^{-s\tau_i})\big)s^2+gs+p
}{\prod_{i=1}^n(1-Q_i\lambda_ie^{-s\tau_i})s^2}.
\end{split}
\end{equation}
This is the transfer function of the RC controller for $i=n$. For $i=1$, this recovers the transfer function for single RC \cite{Steinbuch}.

\section{Problem formulation}
A prominent feature of the proposed controller  (\ref{s2}) is that both multi-period RC and TDC \cite{Hsia}, \cite{Youcef} can be designed by simply adjusting the control parameter $\tau_i$. 
There are five design parameters in the controller (\ref{s2}), as listed in Table \ref{tab:control} along with their suggested choice.  We explain how to choose those parameters for both RC design and TDC design hereafter.
\begin{table}[h!]
  \begin{center}
  \caption{parameter choices For controller (\ref{s2})}    \label{tab:control}
    \begin{tabular}{|c|c|c|c|} 
\hline   
& \multicolumn{2}{|c|}{RC}&TDC\\ \hline 
\multirow{2}{*}{$Q_i$}&Low-pass filter & Band-pass filter & Low-pass or Band-pass filter \\ \cline{2-4}
& $ \frac{\omega_{c_i}}{s+\omega_{c_i}}$& $\frac{\omega_{c_{1_i}}}{s+\omega_{c_{1_i}}}\frac{s}{s+\omega_{c_{2_i}}}$& Same as defined for RC \\ \hline 
$\lambda_i$& $ \left|Q_i(\omega_{d_i})^{-1}\right|-\frac{\beta_i}{\omega_{c_i}}$& $\left|Q_i(\omega_{d_i})^{-1}\right|-\frac{\beta_i}{|\omega_{c_{1_i}}j\omega_{d_i}|}$ & Same as defined for RC\\ \hline
$\tau_i$&\multicolumn{2}{|c|}{$\frac{2\pi+\textrm{arg}(Q_i(\omega_{d_i}))}{\omega_{d_i}}$}&$t_s$\\
      \hline
$p,\,g$& \multicolumn{3}{|c|}{Nominal control parameters}\\ \hline       
 \end{tabular}
  \end{center}
\end{table}

For RC design,
we consider the fundamental frequency of the disturbance $\omega_{d_i}$ to be  known.
 We consider two options for the filter design of $Q_i$. We can either choose $Q_i(s)=\frac{\omega_{c_i}}{s+\omega_{c_i}}$ as a low-pass filter, where $\omega_{c_i}$ denotes the low-pass cutoff frequency, or choose $Q_i(s)=\frac{\omega_{c_{1_i}}}{s+\omega_{c_{1_i}}}\frac{s}{s+\omega_{c_{2_i}}}$ as a band-pass filter, where $\omega_{c_{2_i}}<\omega_{c_{1_i}}$ and $\omega_{c_{1_i}}$ denotes the high-pass and the low-pass cutoff frequencies, respectively.
We choose the RC gain $\lambda_i =\left|Q_i(\omega_{d_i})^{-1}\right|-\frac{\beta_i}{\omega_{c_i}}$ when $Q_i$ is a low-pass filter, and  $\lambda_i=\left|Q_i(\omega_{d_i})^{-1}\right|-\frac{\beta_i}{|\omega_{c_{1_i}}j\omega_{d_i}|}$ when $Q_i$ is a bandpass filter. Scalar $\beta_i$ can be positive or negative and is a design parameter that determines the level of disturbance rejection.
Its range depends on $\omega_{c_i}$ and $\omega_{d_i}$.
 Its design will be explained in more in-depth detail in Section \ref{beta disturbance rejection}. 
For both filters we choose $\tau_i=\frac{2\pi+\textrm{arg}(Q_i(\omega_{d_i}))}{\omega_{d_i}}$. The nominal control parameters $p$ and $g$ are designed for the RC closed-loop stability.
For TDC design,
it is not required that the disturbance frequencies be known. 
 We choose $\tau_i=t_s$, where $t_s$ is the sampling period of the electromechanical system. The filter $Q_i$ and the gain $\lambda_i$ attain the same formula as RC.

Our objective is as follows:
for multi-period RC, where 
given a periodic disturbance with known frequency $\omega_{d_i}$ and   given a filter $Q_i$, which can  either be a low-pass filter with cutoff frequencies $\omega_{c_i}$ or a band-pass filter with cutoff frequencies $\omega_{c_{1_i}}$, $\omega_{c_{2_i}}$, ($\tau_i$ and $\lambda_i$ can be derived as in Table \ref{tab:control}),
control parameters $\beta_i$ (positive or negative) and the nominal control gain $p,\, g$ 
 will be designed  such that 
 the  trajectories of the closed-loop system (\ref{Fd_x2}) under  control $F_1$ in (\ref{s2_time}) 
 converge to the $M_i(\beta_i)\Delta_i-$neighbourhood of  $\{x_1(\omega_{d_i}),\,x_2(\omega_{d_i}),\, \dot x_2(\omega_{d_i})=0\}$, where $M_i(\beta_i)\geq 0$ is a positive number to be minimized. 
$M_i(\beta_i)=0$ is only possible if $\beta_i=0$ can be chosen, which recovers RC with full gain  $Q_i(\omega_{d_i})\lambda_i=1$. 
For TDC, where
$\tau_i=t_s$, $M_i(\beta_i)=0$ cannot be designed, but it can be minimized for small enough $t
_s$. 
\begin{remark}\label{remark2}
  A cascaded multi-period RC is proposed in \cite{M.Yamada}--\cite{Blanken} as shown in Fig \ref{fig:Cascaded}, where $\hat T$ denotes the identified transfer function of the nominal closed-loop plant $T=\frac{CG}{1+CG}$ from $F_1$ to $\dot x_2$, which is the complementary sensitivity function. 
Learning filters $L_1,\,L_2$ are designed to account for the model mismatch between $T$ and $\hat T$. 
Setting $\hat T=0$ in Fig. \ref{fig:Cascaded} recovers the parallel multi-period RC as proposed in \cite{W.Chang}, \cite{Owens}.
Suppose  all the filters $Q_1,\,Q_2,\, L_1,\,L_2$ are set to 1, the  cascaded RC in the figure, which is outside the dotted box, can be expressed as
\begin{equation}\label{F_cas}
 -\frac{F_1(s)}{s X_2(s)}= \frac{1-\prod_{i=1}^2(1-\lambda_ie^{-s\tau_i})-(1-\hat T)\lambda_1\lambda_2e^{-s(\tau_1+\tau_2)}   }{\prod_{i=1}^2(1-\lambda_ie^{-s\tau_i}) }.
\end{equation}
Note that the additive structure of (\ref{F_cas}) severely differs from the multiplicative structure in (\ref{s7}).
We will investigate the impact of such an additive structure in the following. 

Equation (\ref{F_cas}) can be rearranged   as $F_1(s)-s X_2(s)=-\frac{1-(1-\hat T)\lambda_1\lambda_2e^{-s(\tau_1+\tau_2)}}{\prod_{i=1}^2(1-\lambda_ie^{-s\tau_i})}s X_2(s)   $. Substituting this into 
the dotted box in Fig \ref{fig:Cascaded}, where  $s X_2(s)=G(F_d+C(F_1(s)-s X_2(s)) )$, yields
  \begin{equation}\label{cascaded_01}
  \begin{split}
 s X_2(s)&=\frac{\prod_{i=1}^2(1-\lambda_i e^{-s\tau_i})G}{\prod_{i=1}^2(1-\lambda_ie^{-s\tau_i})+CG\big(1-(1-\hat T)\lambda_1\lambda_2e^{-s(\tau_1+\tau_2)}\big)}F_d(s).\\
\end{split}
 \end{equation} 
 \begin{figure}[H]
\centering
  \includegraphics[width=1\linewidth]{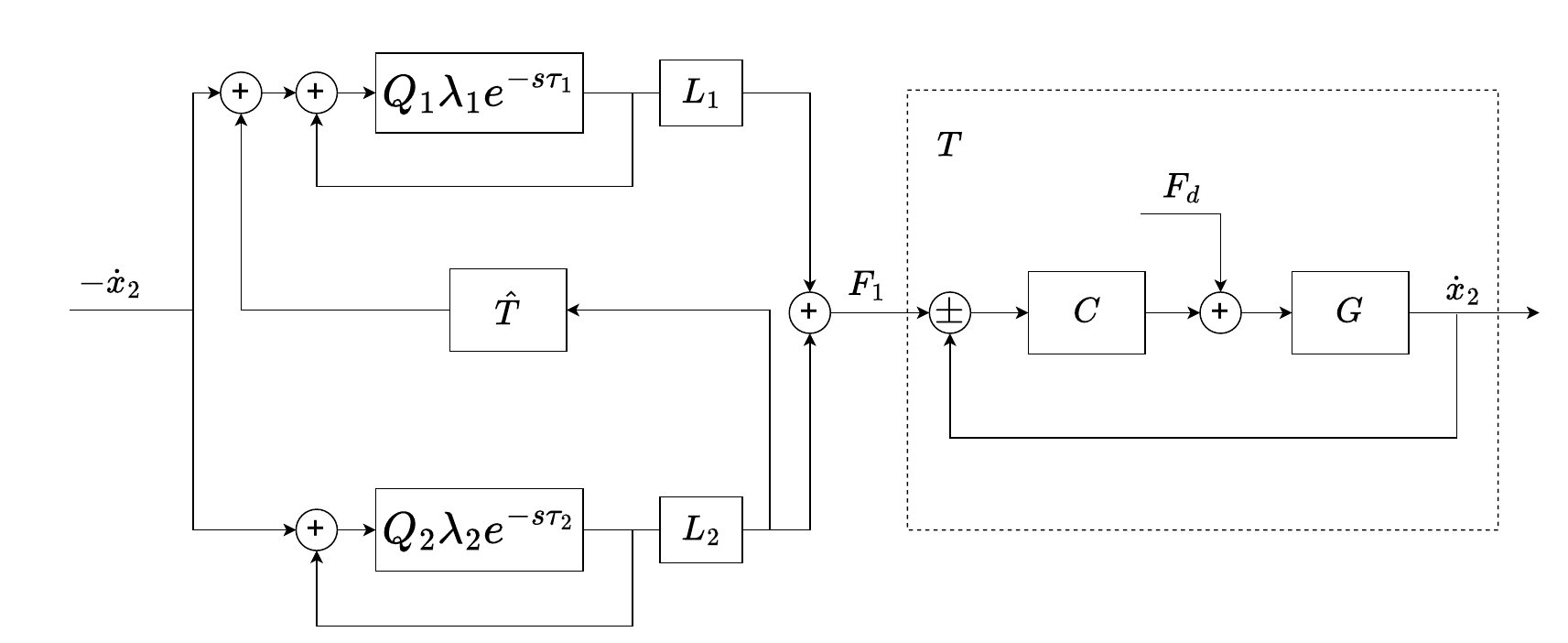}
  \caption{Cascade multi-period RC in  \protect\cite{M.Yamada} -- \protect\cite{Blanken}. $\hat T=0$ represents the parallel multi-period RC in \protect\cite{W.Chang}, \protect\cite{Owens}. }
  \label{fig:Cascaded}
\end{figure}

In cascaded RC, suppose a perfectly identified $\hat T = T$. Then we have $CG(1-T) = T$. 
Denoting $L=CG$ as loop transfer function, $S=(1+L)^{-1}$ as sensitivity function, 
 (\ref{cascaded_01}) can be written as
\begin{equation}\label{cascaded_02}\begin{split}
 s X_2(s)&=\frac{\prod_{i=1}^2(1-\lambda_i e^{-s\tau_i})G}{\prod_{i=1}^2(1-\lambda_ie^{-s\tau_i})+CG-T\lambda_1\lambda_2e^{-s(\tau_1+\tau_2)}}F_d(s)\\
 &= \frac{\prod_{i=1}^2(1-\lambda_i e^{-s\tau_i})G}{\prod_{i=1}^2(1-\lambda_ie^{-s\tau_i})+L(1-S\lambda_1\lambda_2e^{-s(\tau_1+\tau_2)})}F_d(s).
\end{split}
\end{equation}
In cascaded RC for a well identified $\hat T\approx T$, provided that sensitivity function  $S\approx 0$ (or $T\approx 1$) is possible (by that we mean the closed-loop is stable), this will allow to eliminate the interaction term $e^{-s(\tau_1+\tau_2)}$ in (\ref{cascaded_02}). For parallel RC, where $\hat T=0$, the interaction will exist.
Such interaction, however, does not exist in the proposed approach in (\ref{Fd_x2}), where we do not require  $T\approx 1$.

 \begin{figure}[H]
\centering
  \includegraphics[width=1\linewidth]{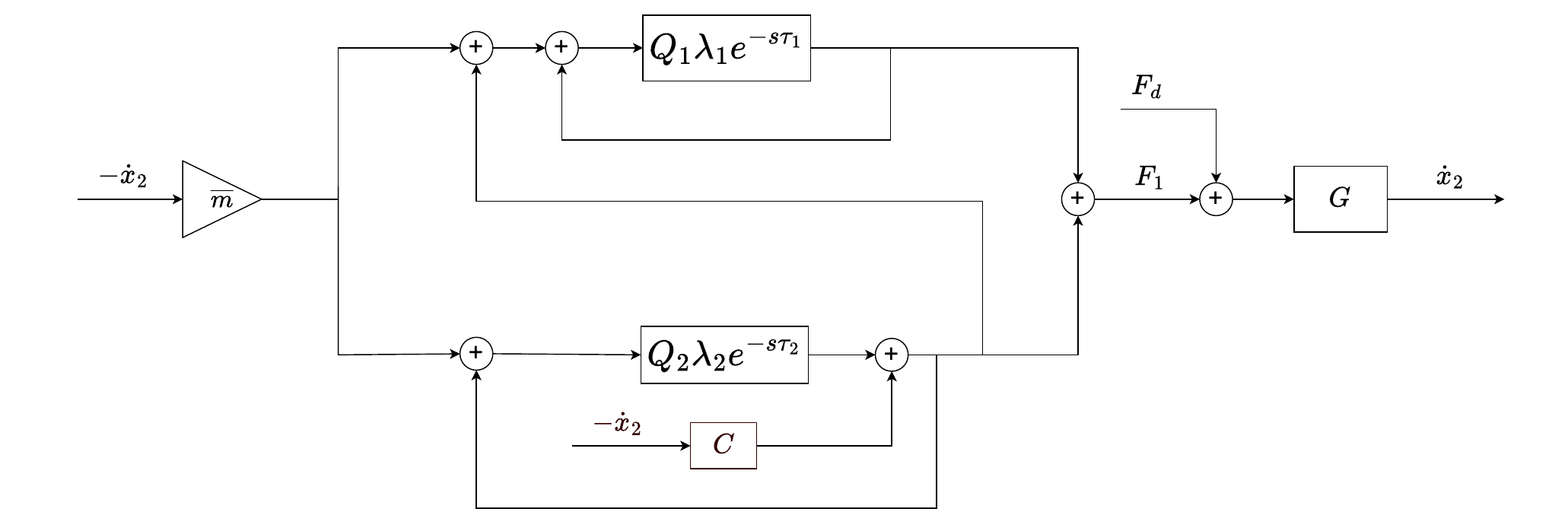}
  \caption{Modified cascade two-period repetitive control without using an identified plant model}
  \label{fig:Cascaded_2RC}
\end{figure}
In fact, we propose a modification of the multi-period RC in Fig. \ref{fig:Cascaded}, as shown in Fig. \ref{fig:Cascaded_2RC}.
   The modification is made by taking $\hat T=1$ (Note this is not the identified closed-loop gain. It is rather just a number here. The transfer function (\ref{Fd_x2}) does not require $T\approx 1$) in Fig. \ref{fig:Cascaded}  and moving the nominal controller $C$ out of $T$ and adding it to the input by the last RC.  
This is equivalent to the proposed architecture in Fig. \ref{fig:2H}, where its closed-loop transfer function is given in (\ref{Fd_x2}).  RC can be turned off by setting $\lambda_i=0$ to recover the nominal control $F_1=-C\dot x_2$, which
pertains to the modified cascaded RC as a plug-in controller. Our approach neither needs an identified plant model nor a high gain control, i.e., $S\approx 0$ (or $T\approx 1$) to eliminate multi-frequency interaction.    This is advantageous over conventional cascaded RC.

\end{remark}

\begin{remark}\label{TDC remark}
The RC and TDC disturbance observer can be shown to have equivalent structures.  
Consider the delay to be the sample period
 $\tau_i=t_s$, which is small for sampling frequency of 500 Hz or 1000 Hz in most of the modern electromechanical or robotic systems. At low frequencies, $st_s\ll 1$, and
using the first-order Taylor approximation, we have
 $e^{-st_s}\approx 1-st_s$. Suppose $|Q_i(\omega_{d_i})\lambda_i|=1$, 
 (\ref{s_22}) can be rewritten as
\begin{equation}\label{TDC_21}\begin{array}{l}
F_1(s)=(-1+(st_s)^n)\Phi_1(s)-Cs X_2(s)
\end{array}
\end{equation}
and (\ref{Fd_x2}) becomes 
\begin{equation}\label{dis compen}
\begin{array}{l}
\big(\overline m+(st_s)^n\Delta m\big)sX_2(s)=-\big(p+(st_s)^nk\big)X_1(s)-\big(g+(st_s)^nd\big)X_2(s)+(st_s)^nF_d(s)
\end{array}
\end{equation}
It is  shown in (\ref{dis compen}) that the disturbance $F_d(s)$ can be compensated for all $s<\frac{1}{t_s}$; larger $n$ lead to larger compensation.  
Controller (\ref{TDC_21}) recovers the TDC disturbance observer \cite{Hsia}, \cite{Youcef}, \cite{b1}.

There are two benefits resulting from including the TDC
controller (\ref{TDC_21})  in the RC. Firstly, RC often requires the exact frequencies of the disturbance to be known. When the disturbance frequency is uncertain, RC performance will deteriorate \cite{Steinbuch}.
TDC can be used in conjunction with RC to compensate for these frequency uncertainties, resulting in suppression at the RC frequencies and enhanced robustness against them. 
Secondly,
since the TDC controller (\ref{TDC_21}) contains the information of the periodic disturbance $F_d(s)$, the reconstructed disturbance signal by TDC helps reconstruct the disturbance frequencies by using frequency identification techniques, 
such as adaptive notch filtering, extended Kalman filter frequency estimation, phase-locked loop technique, adaptive identifier, etc. \cite{Qing} -\cite{Muramatsu}.
Once the disturbance frequency components have been identified, RC can be designed using those frequencies. These two features will be demonstrated in the experimental results.
\end{remark}
\begin{remark}\label{remark1}
A single period high order RC is proposed in \cite{Steinbuch}, \cite{Steinbuch2} as shown in Fig. \ref{fig:0}. It is designed to improve
 the robustness to variation in the disturbance frequency. 
 $\tau$  denotes the period of a single-period disturbance. Parameters $W_i$ must be selected such that $\sum_{i=1}^nW_i=1$ and $\sum_{i=1}^nW_ii^{n-1}=0$. 
Suppose the filter is set to $Q=1$, which leads to the following transfer function
 \begin{equation}\label{steinbuch}
-\frac{F_1(s)}{\overline m sX_2(s)}=\frac{1-(1-e^{-s\tau})^n}{(1-e^{-s\tau})^n}.
 \end{equation}
This is in fact a special case of the more general structure of MPRC proposed in (\ref{s7}) for single period repetitive disturbance, where $|Q_i(\frac{2\pi}{\tau} )\lambda_i|=1$, $\tau_i=\tau$ and $C=0$. Due to the strict constraints on RC gains $W_i$, the same filter $Q$ must be implemented for all RC gains $W_1,\ldots,W_n$ in (\ref{steinbuch}). 
In comparison, different filters can be designed for each RC gain in (\ref{s7}), also seen in Fig. \ref{fig:2H}. 
Besides, RC gain $|Q_i(\omega_{d_i})\lambda_i|$ less than unity can be designed, which is useful when weighting control energy at multiple frequencies. 
This is, however, not possible in Fig. \ref{fig:0} by $W_i$. 
The robustness of RC (\ref{s7}) to the frequency variation of $\omega_{d_i}$ can be significantly enhanced by choosing $\tau_i$ as the period of the disturbance in RC design and $\tau_j=t_s$ in TDC design, since TDC allows suppression at a much wider lower frequency range as discussed in Remark \ref{TDC remark}. We will demonstrate that the robustness of including TDC in RC is more promising than (\ref{steinbuch}) in the experimental results.
\begin{figure}[H]
\centering
  \includegraphics[width=1\linewidth]{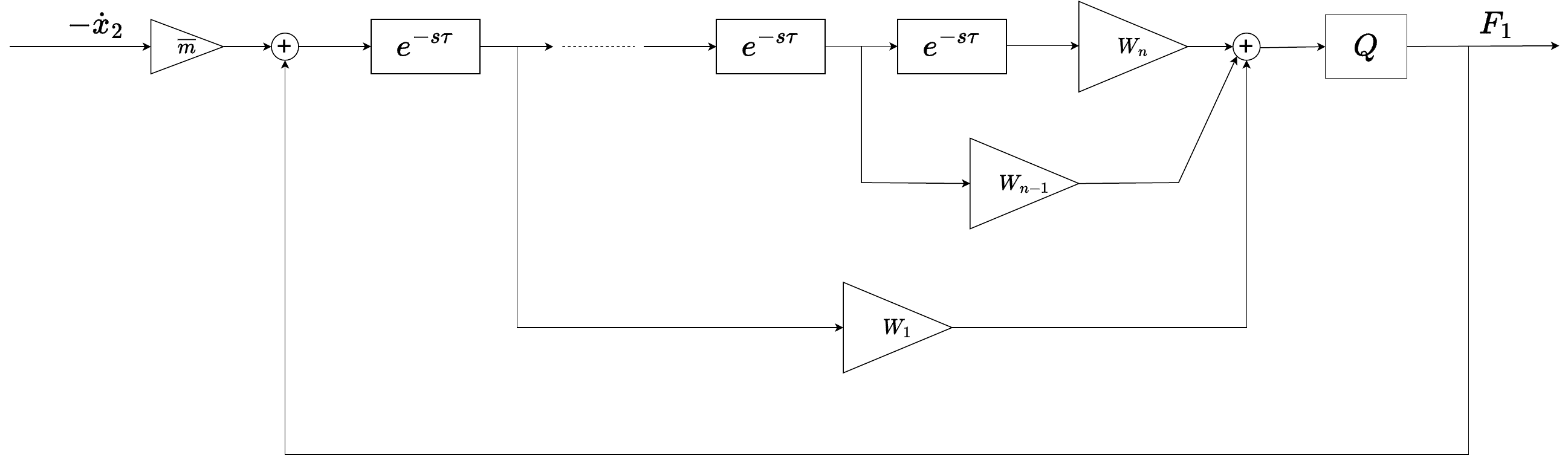}
  \caption{High order repetitive controller design in \protect\cite{Steinbuch} }
  \label{fig:0}
\end{figure}
\end{remark}

\section{\textbf{State-space formulation for $n=2$} }
For the stability analysis of the closed-loop system with RC and nominal control (\ref{s6_1}), we will begin with a state-space formulation using $n=2$. This can be extended to larger $n$ as will be shown later.  Expanding (\ref{s6_1}) we write
\begin{equation}\label{s6_2}
\overline ms X_2(s)=(1-Q_1(s)\lambda_1e^{-s\tau_1})(1-Q_2(s)\lambda_2e^{-s\tau_2})\Phi_1(s)-pX_1(s)-gX_2(s).
\end{equation}
Suppose $Q_1(s)$, $Q_2(s)$ are first order low-pass filters,
 we can write
\begin{equation}\label{lap}
\begin{array}{l}
(s+\omega_{c_1})(s+\omega_{c_2})\big((ms^2+(d+g)s+k+p)X_1(s)-F_d(s)\big)=\big(-\omega_{c_1}\lambda_1(s+\omega_{c_2})e^{-s\tau_1}\\
-\omega_{c_2}\lambda_2(s+\omega_{c_1})e^{-s\tau_2}+\omega_{c_1}\omega_{c_2}\lambda_1\lambda_2e^{-s(\tau_1+\tau_2)}\big)\Phi_1(s).
\end{array}
\end{equation}
Expanding the left-hand side, (\ref{lap}) can be rewritten as
\begin{equation}\label{x21}
\begin{array}{l}
ms^4X_1(s)=-\big(((\omega_{c_1}+\omega_{c_2})m+d+g)s^3+(\omega_{c_1}\omega_{c_2}m+(\omega_{c_1}+\omega_{c_2})(d+g)+k+p)s^2\\+(\omega_{c_1}\omega_{c_2}(d+g)+(\omega_{c_1}+\omega_{c_2})(k+p))s+\omega_{c_1}\omega_{c_2}(k+p)\big)X_1(s)+(s^2+(\omega_{c_1}+\omega_{c_2})s\\+\omega_{c_1}\omega_{c_2})F_d(s)
+\big(-\omega_{c_1}\lambda_1(s+\omega_{c_2})e^{-s\tau_1}
-\omega_{c_2}\lambda_2(s+\omega_{c_1})e^{-s\tau_2}\\+\omega_{c_1}\omega_{c_2}\lambda_1\lambda_2e^{-s(\tau_1+\tau_2)}\big)\Phi_1(s).
\end{array}
\end{equation}
Substituting $\Phi_1(s)=-\Delta msX_2(s)-kX_1(s)-dX_2(s)$,  and denoting $x_3(t)=\dot x_2(t)$, $x_4(t)=\dot x_3(t)$, and vectors $x(t)=\textrm{col}\{x_1(t),\,x_2(t),\,x_3(t),\,x_4(t)\}$ and
  $F_{d_s}(t)=\textrm{col}\{F_d(t)$, $\dot F_d(t)$, $\ddot F_d(t)\}$,
(\ref{x21}) can be put in state-space form as a time-delay system with multiple and interacting delays:
\begin{equation}\label{state_1}
\begin{array}{l}
D_m\dot x=A
x+
A_1x(t-\tau_1)
+A_2
x(t-\tau_2)+
A_3
x(t-\tau_1-\tau_2)\\
+BF_{d_s}(t)+B_1F_{d_s}(t-\tau_1)
+B_2F_{d_s}(t-\tau_2)+B_3F_{d_s}(t-\tau_1-\tau_2),
\end{array}
\end{equation}
where $D_m=\text{diag}\{1,\,1,\,1,\, m\}$ and
\begin{equation}
\resizebox{0.98\linewidth}{!}{$
\begin{array}{l}
A=\begin{pmatrix}
0 &1&0&0\\0&0&1&0\\0&0&0&1\\  A_{41}& A_{42}& A_{43}&A_{44}
\end{pmatrix},\, A_1=\begin{pmatrix}
0 &0&0&0\\0&0&0&0\\0&0&0&0\\A_{1_{41}}&A_{1_{42}}&A_{1_{43}}&A_{1_{44}}\end{pmatrix},\, 
A_2=\begin{pmatrix}
0 &0&0&0\\0&0&0&0\\0&0&0&0\\A_{2_{41}}&A_{2_{42}}&A_{2_{43}}&A_{2_{44}}
\end{pmatrix},\\ A_3=\begin{pmatrix}
0 &0&0&0\\0&0&0&0\\0&0&0&0\\A_{3_{41}}&A_{3_{42}}&A_{3_{43}}&A_{3_{44}}
\end{pmatrix},\,  B=\begin{pmatrix}
0&0&0\\0&0&0\\0&0&0\\B_{41}&B_{42}&B_{43}
\end{pmatrix},\, B_1=\begin{pmatrix}
0&0&0\\0&0&0\\0&0&0\\B_{1_{41}}&B_{1_{42}}&B_{1_{43}}
\end{pmatrix},\\ B_2=\begin{pmatrix}
0&0&0\\0&0&0\\0&0&0\\B_{2_{41}}&B_{2_{42}}&B_{2_{43}}
\end{pmatrix},\, B_3=\begin{pmatrix}
0&0&0\\0&0&0\\0&0&0\\B_{3_{41}}&B_{3_{42}}&B_{3_{43}}
\end{pmatrix},
\end{array}
$}
\end{equation}
with matrix entries given by
\begin{equation}\label{A_matrix}
\begin{array}{l}
\begin{pmatrix}
A_{41}\\A_{42}\\A_{43}\\A_{44}
\end{pmatrix}=-\begin{pmatrix}\underbrace{\begin{pmatrix}
k&0&0\\ d&k&0\\\Delta m&d&k\\0&\Delta m&d
\end{pmatrix}}_{K}+\underbrace{
\begin{pmatrix}
p&0&0\\g&p&0\\\overline m&g&p\\0&\overline m&g
\end{pmatrix}}_{K_p} \end{pmatrix}\underbrace{
\begin{pmatrix}
\omega_{c_1}\omega_{c_2}\\ \omega_{c_1}+\omega_{c_2}\\1
\end{pmatrix}}_{\overline \omega}\\

\begin{pmatrix}
A_{1_{41}}\\A_{1_{42}}\\A_{1_{43}}\\ A_{1_{44}}
\end{pmatrix}=
\omega_{c_1}\lambda_1
K(:,1:2)\underbrace{
\begin{pmatrix}
\omega_{c_2}\\1
\end{pmatrix}}_{\overline \omega_2},\,\,\,\,\, 
\begin{pmatrix}
A_{2_{41}}\\A_{2_{42}}\\A_{2_{43}}\\ A_{2_{44}}
\end{pmatrix}=\omega_{c_2}\lambda_2
K(:,1:2)\underbrace{
\begin{pmatrix}
\omega_{c_1}\\1
\end{pmatrix}}_{\overline \omega_1},\\
\begin{pmatrix}
A_{3_{41}}\\A_{3_{42}}\\A_{3_{43}}\\A_{3_{44}}
\end{pmatrix}=-\omega_{c_1}\omega_{c_2}\lambda_1\lambda_2
K(:,1),\,\,\, \begin{pmatrix}
B_{41}\\B_{42}\\B_{43}
\end{pmatrix}=\bar \omega,\,\,\, \begin{pmatrix}
B_{1_{41}}\\B_{1_{42}}\\B_{1_{43}}
\end{pmatrix}=-\omega_{c_1}\lambda_1I_3(:,1:2)\bar \omega_2,\\
\begin{pmatrix}
B_{2_{41}}\\B_{2_{42}}\\B_{2_{43}}
\end{pmatrix}=-\omega_{c_2}\lambda_2I_3(:,1:2)\bar \omega_1,\,\,\, \begin{pmatrix}
B_{3_{41}}\\B_{3_{42}}\\B_{3_{43}}
\end{pmatrix}=\omega_{c_1}\omega_{c_2}\lambda_1\lambda_2I_3(:,1).
\end{array}
\end{equation}
 The matrices in  (\ref{A_matrix}) can be   expanded for $n>2$ by the following rule:
\begin{remark}
The first column of $K$ always start with $\begin{pmatrix}
k&d&\Delta m
\end{pmatrix}^T$. The second column is a repetition of the first column, but with a zero in the first row. This repetition propagates to the third column. The number of rows and columns in $K$ is $n+2$ and $n+1$, respectively. 
 Matrix $K_p$ follows the same pattern but with $\begin{pmatrix}p&g& \overline m\end{pmatrix}^T$. 
The terms in $\overline \omega$ are the convolution of the $1+\overline \omega_{c_i}$, resulting from the multiplication of polynomials.
In $\omega_{c_1}\lambda_1K(:,1:2)\overline \omega_2$, where $K(:,1:2)$ denotes the matrix with the entire first and second columns of matrix $K$,
the terms $\omega_{c_1}\lambda_1$ and  $\overline \omega_2$
 relate to $\omega_{c_1}\lambda_1(s+\omega_{c_2})e^{-s\tau_1}$ in (\ref{x21}). The parameters in $\overline \omega_2$ correspond to the parameters of the polynomial $s+\omega_{c_2}$ in reverse order.
$K(:,1:2)$ keeps the first two columns in $K$  as its number of columns needs to be equal to the number of rows in $ \omega_2$.
  The same rules apply to the parameters of $A_2$ and $A_3$.
Following these rules, an extension to case $n=3$ can be derived in Appendix \ref{Appen_A}.
\end{remark}

Denoting 
$\bar A=A+A_1+A_2+A_3$, $\bar B=B+B_1+B_2+B_3$, $B_s=\begin{pmatrix}
\bar B& -B_1&-B_2&-B_3
\end{pmatrix}$ 
and $\overline F_{d}(t)=\textrm{col}\{F_{d_s}(t),\, \int_{t-\tau_1}^t\dot F_{d_s}(\mu)\textrm{d}\mu$, $\int_{t-\tau_2}^t\dot F_{d_s}(\mu)\textrm{d}\mu$, $\int_{t-\tau_1-\tau_2}^t\dot F_{d_s}(\mu)\textrm{d}\mu\}$, (\ref{state_1}) can be rewritten as a distributed time delay system I converted to this distributed delay system by myself. This is a simple conversion. For example, $x(t-\tau)=x(t)-\int_{t-\tau}^t\dot x(\mu)\text{d}\mu$. 
\begin{equation}\label{state space}
\begin{array}{l}
D_m\dot x(t)=\bar Ax(t)-A_1\int_{t-\tau_1}^t\dot x(\mu)\textrm{d}\mu
-A_2\int_{t-\tau_2}^t\dot x(\mu)\textrm{d}\mu-A_3\int_{t-\tau_1-\tau_2}^t\dot x(\mu)\textrm{d}\mu+B_s\overline F_{d}(t).
\end{array}
\end{equation}
From (\ref{state space}), we can see the system stability is determined by system matrices $D_m$, $\bar A$, $A_1$, $A_2$, $A_3$. The  disturbance $B_s(\textrm{end},:)\overline F_d(t)$, where $B_s(\textrm{end},:)$ denotes the last row of matrix $B_s$, is the result of multiple RC control interactions, where it is suppressed at the designated frequencies (e.g. $1-e^{-s\tau_i}=0$ for $\omega_i=n\frac{2\pi}{\tau_i}$ rad/s, $n=1,\,2,\,3,\ldots$) in (\ref{Fd_x2})
but is amplified at the intermediate frequencies (e.g. $1-e^{-s\tau_i}=2$ for $\omega_i=n\frac{\pi}{\tau_i}$ rad/s, $n=1,\,3,\,5,\ldots$), where we have assumed $Q_i\lambda_i=1$. 
Depending on the filter cut-off frequencies and the disturbance frequencies to be suppressed, the RC gain $\lambda_i$ design should account for this to ensure any necessary compensation for the filters' roll-off. 
Therefore, this disturbance is bounded by the filter cut-off frequencies $\omega_{c_1}$, $\omega_{c_2}$ and the RC gains $\lambda_1$, $\lambda_2$. By the formula of the RC gains in Table \ref{tab:control}, they depend on the design of $\beta_i$.  
In the following, we provide conditions for designing $\beta_1$, $\beta_2$ which minimize the disturbance $B_s(\textrm{end},:)\overline F_d(t)$ 
 based on a given  frequency $\omega_{d_x}$  and given filter cut-off frequencies. The frequency $\omega_{d_x}$ could be the disturbance frequencies $\omega_d$, where larger suppression is desirable, or any other intermediate frequencies,  should they be optimized in addition to $\omega_d$.
 We denote $B_s(\textrm{end},:)\overline F_d(t)$ as RC-induced-disturbance since it is the disturbance induced by the RC control. 
\begin{remark}\label{bandState}
In light of the state-space formulation for $n=2$ with two low-pass filters,
we can formulate a state-space system using a band-pass filter for $n=1$ as follows.
Suppose a band-pass filter is designed as
$
Q_1(s)=\frac{\omega_{c_{1_1}}}{s+\omega_{c_{1_1}}}\frac{s}{s+\omega_{c_{2_1}}}
$, where $\omega_{c_{2_i}}<\omega_{c_{1_i}}$ and $\omega_{c_{1_i}}$ denotes the high-pass cutoff frequency and low-pass cutoff frequency, respectively.
Then for single RC, (\ref{s6_2}) becomes
\begin{equation}\label{q_b}
\overline m\dot x_2=(1-Q_{1}(s)\lambda_1 e^{-s\tau})\Phi_1(s)-px_1-gx_2,
\end{equation} 
and (\ref{lap}) can be rewritten as
\begin{equation}
(s+\omega_{c_{1_1}})(s+\omega_{c_{2_1}})\big((ms^2+(d+g)s+k+p)X_1(s)-F_d(s)\big)=-\lambda_1 \omega_{c_1} e^{-s\tau}s\Phi_1(s).
\end{equation}
For state space formulation with a single band-pass filter,   (\ref{state space}) can be written with $A_1,\,A_2=0$. We denote $\tau=\tau_1+\tau_2$ and replace the parameters of $A_3$ by the parameters in $A_1$ in (\ref{A_matrix}) where  $\bar \omega_2=\begin{pmatrix}
0\\1\end{pmatrix}$, and $\omega_{c_1}$ by $\omega_{c_{1_1}}$, $\omega_{c_2}$ by $\omega_{c_{2_1}}$, then (\ref{state space}) can be considered for single RC with a band-pass filter.
\end{remark}
 \section{\textbf{Design of $\beta_i$ for RC-induced-disturbance}}\label{beta disturbance rejection}
  This section provides conditions for designing $\beta_i$ to achieve the maximum possible suppression at a given frequency, given the disturbance frequencies and the filters' cut-off frequencies. Both the low-pass and band-pass filter cases are considered. 
For a filter $Q_i$ with given cut-off frequencies, we consider how to design $\lambda_i$ such that the product of the two $Q_i\lambda_i$ yields the maximum suppression for an RC-induced disturbance at a given frequency.
To do that, we find parameter $\beta_i$  in $\lambda_i$, which compensates for the filter gain $Q_i(s)$,  such that the $B_s(\textrm{end},:)\overline F_d(s)$ in (\ref{state space})  at a given frequency $\omega_{d_x}$ is minimized. Note $\omega_{d_x}$ could  coincide with the disturbance frequency $\omega_{d_x}=\omega_{d_i}=\frac{2\pi}{\tau_i}$ rad/s or
be any other frequencies $\omega_{d_x}\neq\omega_{d_i}$ where optimization is of interest.  We denote $F_{d_R}=B_s(\textrm{end},:)\overline F_d(s)$ as  the RC-induced-disturbance,  where the non-zeros terms in $B_s$ are considered.
 \subsection{low-pass filter}
Suppose all filters in the RC are low-pass filters whose formula is defined in Table \ref{tab:control},
expanding $B_s(\textrm{end},:)\overline F_d(s)$ according to  $B_s,\,\overline F_d$ and $F_{d_s}$, we can write the RC-induced-disturbance in the following  Laplace form  for the case of $i=n$
     
 \begin{equation}\label{remark7_1}
F_{d_R}=B_s(\textrm{end},:)\overline F_d(s)=\prod_{i=1}^n(s+\omega_{c_i}(1-\lambda_ie^{-s\tau_i}))F_d(s).
 \end{equation}
 This shows that the RC-induced disturbance by the proposed multi-period RC is the product of the RC-induced disturbance induced by each RC loop. 
  Let us denote a positive number $\gamma_i(\beta_i)\geq 0$.  For a specific frequency, the following holds
  \begin{equation}\label{rem_1}
s+\omega_{c_i}(1-\lambda_ie^{-s\tau_i})=\gamma_i (s+\omega_{c_i}),\hspace{.1in} \textrm{i.e.}\hspace{.1in}\gamma_i=1-Q_i(s)\lambda_ie^{-s\tau_i}.
\end{equation}
The intuition to design $\gamma_i$ in this way can be understood by substituting it into the closed-loop system (\ref{s6_1}) so that we can write $\overline msX_2(s)=\prod_{i=1}^n\gamma_i\Phi_1(s)-CsX_2(s)$. Our objective is to minimize the influence of the total disturbance $\Phi_1(s)$ on system outputs by minimizing $\gamma_i$ for a specific frequency.

Therefore, it is desirable to minimize $\gamma_i$ in order to minimize the RC-induced disturbance. 
   (\ref{remark7_1}) can be rewritten as
\begin{equation}\label{distur_form}
F_{d_R}=\prod_{i=1}^{n}\big(\gamma_i(s+\omega_{c_i})\big)F_d(s).
\end{equation}
 Recall from Table \ref{tab:control} that $\lambda_i=\left|\frac{j\omega_{d_i}+\omega_{c_i}}{\omega_{c_i}}\right|-\frac{\beta_i}{\omega_{c_i}}$ and $\tau_i=\frac{2\pi+\textrm{arg}(Q_i(\omega_{d_i}))}{\omega_{d_i}}$.
Let us denote $\omega_{d_x}$ as a designed frequency by the user, and denote the argument of the following transfer function as 
\begin{equation}\label{rem_1_1}
\textrm{arg}\left(\frac{\omega_{c_i}\lambda_i}{j\omega_{d_x}+\omega_{c_i}}  e^{-j\omega_{d_x}\tau_i}\right)=-\textrm{tan}^{-1}\left(\frac{\omega_{d_x}}{\omega_{c_i}}\right)
-\frac{\omega_{d_x}}{\omega_{d_i}}\left(2\pi-\textrm{tan}^{-1}\left(\frac{\omega_{d_i}}{\omega_{c_i}}\right)\right)\triangleq \alpha_{i}.
\end{equation} 
Then (\ref{rem_1}) can be rewritten as
\begin{equation}\label{gamma_j}
\gamma_i=1-\left|\frac{\omega_{c_i}\lambda_i}{j\omega_{d_x}+\omega_{c_i}}\right|e^{j\alpha_{i}}=1-\frac{|j\omega_{d_i}+\omega_{c_i}|-\beta_i}{|j\omega_{d_x}+\omega_{c_i}|}e^{j\alpha_{i}}.
\end{equation}
Therefore, we can write
\begin{equation}\label{gamma_jabs}
\begin{split}
&\gamma_i^2=\frac{\big(|j\omega_{d_x}+\omega_{c_i}|-(|j\omega_{d_i}+\omega_{c_i}|-\beta_i)\textrm{cos}(\alpha_i)\big)^2+\big((|j\omega_{d_i}+\omega_{c_i}|-\beta_i)\textrm{sin}(\alpha_i)\big)^2}{|j\omega_{d_x}+\omega_{c_i}|^2}.
\end{split}
\end{equation}
 Denoting $a_{i}=|j\omega_{d_x}+\omega_{c_i}|-|j\omega_{d_i}+\omega_{c_i}|b_{i}$, $b_{i}=\textrm{cos}(\alpha_i)$, $c_{i}=\frac{1}{|j\omega_{d_x}+\omega_{c_i}|}$, $e_{i}=-|j\omega_{d_i}+\omega_{c_i}|f_{i} $, $f_{i}=-\textrm{sin}(
 \alpha_i)$, which are all known quantities,  for a positive number $r_i>0$,  we can write
 \begin{equation}\label{gamma_js}
 \gamma_i^2=c_{i}^2(a_{i}+b_{i}\beta_i)^2+c_{i}^2(e_{i}+f_{i}\beta_i)^2\leq r_i^2,\hspace{.1in} \lambda_i\geq 0,
 \end{equation}
 if
 \begin{equation}\label{LMI_gamma}
\begin{pmatrix}
-r_i^2+c_{i}^2(a_{i}^2+e_{i}^2)+2c_{i}^2(a_{i}b_{i}+e_{i}f_{i})\beta_i&\sqrt{c_{i}^2(b_{i}^2+f_{i}^2)}\beta_i\\ \ast& -1
\end{pmatrix}<0,\,\ \ \textrm{and}\,\ \ 
\frac{e_{i}}{f_{i}}+\beta_i<0.
 \end{equation}
 For $\omega_{d_x}=\omega_{d_i}$, we have $\alpha_i=-2\pi$. This implies $\gamma_i=c_i\beta_i=0$ for $\beta_i=0$ in (\ref{gamma_j}). This means there is zero RC-induced-disturbance at $\omega_{d_i}$ with $\beta_i=0$. This is expected, as RC has an infinite control gain at the repetitive frequencies. 
For minimizing repetitive disturbance at frequency    $\omega_{d_x}\neq \omega_{d_i}$ using the $i$-th RC where $\tau_i=\frac{2\pi}{\omega_{d_i}}$,
   it is necessary to compute $\beta_i\neq 0$ via minimization of $r_i$ in LMIs (\ref{LMI_gamma}), resulting in non-zero RC-induced-disturbance at $\omega_{d_i}$. For TDC with $\tau_i=t_s$, this can also be considered following the same procedure.
In which case, the RC-induced disturbance can never be zero as it is proportional to $t_s$.

\subsection{band-pass filter}\label{sec_bandpa}
Recall the band-pass filter  $Q_i(s)=\frac{\omega_{c_{1_i}}s}{(s+\omega_{c_{1_i}})(s+\omega_{c_{2_i}})}$, where $\omega_{c_{2_i}}<\omega_{c_{1_i}}$ as given in Table \ref{tab:control}, the RC-induced-disturbance  can be written as
\begin{equation}\label{band_1}
F_{d_R}=B_s(\textrm{end},:)\overline F_d(s)=\prod_{i=1}^n \big((s+\omega_{c_{1_i}})(s+\omega_{c_{2_i}})-\omega_{c_{1_i}}s\lambda_ie^{-s\tau_i}
\big)F_d(s).
\end{equation}For a specific frequency, the following holds
\begin{equation}\label{band_11}
(s+\omega_{c_{1_i}})(s+\omega_{c_{2_i}})-\omega_{c_{1_i}}s\lambda_ie^{-s\tau_i}=\gamma_i (s+\omega_{c_{1_i}})(s+\omega_{c_{2_i}}),\,\, \textrm{i.e.}\,\, \gamma_i=1-Q_i(s)\lambda_ie^{-s\tau_i},
\end{equation}
we can write (\ref{band_1}) as
\begin{equation}\label{distur_form_BPF}
F_{d_R}=\prod_{i=1}^n\big(  \gamma_i(s+\omega_{c_{1_i}})(s+\omega_{c_{2_i}})       \big)F_d(s).
\end{equation}
Let $\lambda_i=\left|\frac{j\omega_{d_i}+\omega_{c_{1_i}}}{\omega_{c_{1_i}}}\frac{j\omega_{d_i}+\omega_{c_{2_i}}}{j\omega_{d_i}}\right|-\frac{\beta_i}{|\omega_{c_{1_i}}j\omega_{d_i}|}$ and
 $\tau_i=\frac{2\pi+\textrm{arg}(Q_{i}(\omega_{d_i}))}{\omega_{d_i}}$, as given by Table \ref{tab:control}.
For a given frequency $\omega_{d_x}$, let's denote  
\begin{equation}
\resizebox{0.98\linewidth}{!}{$
\begin{split}
\textrm{arg}\left(  \frac{\omega_{c_{1_i}}j\omega_{d_x}}{(j\omega_{d_x}+\omega_{c_{1_i}})(j\omega_{d_x}+\omega_{c_{2_i}})}\lambda_ie^{-j\omega_{d_x}\tau_i}  \right)&=\frac{\pi}{2}-\textrm{tan}^{-1}\left(\frac{\omega_{d_x}}{\omega_{c_{1_i}}}\right)-\textrm{tan}^{-1}\left(\frac{\omega_{d_x}}{\omega_{c_{2_i}}}\right)\\&-\frac{\omega_{d_x}}{\omega_{d_i}}\left(\frac{5\pi}{2}-\textrm{tan}^{-1}\left(\frac{\omega_{d_i}}{\omega_{c_{1_i}}}\right)-\textrm{tan}^{-1}\left(\frac{\omega_{d_i}}{\omega_{c_{2_i}}}\right)\right)\triangleq\alpha_i.                             
\end{split}
$}
\end{equation}
Then  (\ref{band_11}) can be written equivalently as
\begin{equation}
\begin{split}
\gamma_i&=1-\left|\frac{\omega_{c_{1_i}}j\omega_{d_x}}{(j\omega_{d_x}+\omega_{c_{1_i}})(j\omega_{d_x}+\omega_{c_{2_i}})}\right|\lambda_ie^{j\alpha_i} \\
&=1-\frac{|j\omega_{d_x}|  \left(\left|(j\omega_{d_i}+\omega_{c_{1_i}})(j\omega_{d_i}+\omega_{c_{2_i}})\right|-\beta_i\right)}{\left|(j\omega_{d_x}+\omega_{c_{1_i}})(j\omega_{d_x}+\omega_{c_{2_i}})j\omega_{d_i}\right| }e^{j\alpha_i}.
\end{split}
\end{equation}
Denoting \scalebox{0.85}{$a_{i}=\left|(j\omega_{d_x}+\omega_{c_{1_i}}) (j\omega_{d_x}+\omega_{c_{2_i}}) j\omega_{d_i}\right|- \left|(j\omega_{d_i}+\omega_{c_{1_i}}) (j\omega_{d_i}+\omega_{c_{2_i}})\right|b_{i}$}, $b_{i}=|j\omega_{d_x}|\textrm{cos}(\alpha_i)$, $c_{i}=\frac{1}{\left|(j\omega_{d_x}+\omega_{c_{1_i}}) (j\omega_{d_x}+\omega_{c_{2_i}}) j\omega_{d_i}\right|} $,
$e_{i}=-\left|(j\omega_{d_i}+\omega_{c_{1_i}}) (j\omega_{d_i}+\omega_{c_{2_i}}) \right|f_{i}$, $f_{i}=-|j\omega_{d_x}|\textrm{sin}(\alpha_i)$,
  $\beta_i$ can be computed  in LMIs (\ref{LMI_gamma})  that minimizes $r_i$. 

\begin{proposition}\label{proposition1}
Given a disturbance frequency $\omega_{d_i}$,  low-pass cutoff frequency $\omega_{c_i}$ for a low-pass filter (or low-pass and high-pass cutoff frequencies $\omega_{c_{1_i}}$, $\omega_{c_{2_i}}$ for a band-pass filter),  
for any frequency $\omega_{d_x}$,
scalar $\beta_i$ can be computed such that LMIs in 
(\ref{LMI_gamma}) are feasible for large enough $r_i>0$. Then the RC-induced-disturbance  $F_{d_R}(\omega_{d_x})$ with $\lambda_i=\left|Q_i(\omega_{d_i})^{-1}\right|-\frac{\beta_i}{\omega_{c_i}}$ if $Q_i$ is a low-pass filter (or $\lambda_i=\left|Q_i(\omega_{d_i})^{-1}\right|-\frac{\beta_i}{|\omega_{c_{1_i}}j\omega_{d_i}|}$ if $Q_i$ is a band-pass filter), 
and $\tau_i=\frac{2\pi+\textrm{arg}(Q_i(\omega_{d_i}))}{\omega_{d_i}}$ for RC (or $\tau_i=t_s$ for TDC),  
is minimized for small enough $r_i$. Moreover, for the case of RC, where $\omega_{d_x}=\omega_{d_i}$, the 	LMIs are feasible for arbitrarily small $r_i>0$ which yields  $\beta=0$. 
Denoting

\begin{subequations}\label{eq:gamma_bar}
\begin{empheq}[left={ \overline{\gamma} = \empheqlbrace }]{align}
&\prod_{i=1}^{n} r_i |j\omega_{d_x} + \omega_{c_i}|, && \text{if } Q_i \text{ is a low-pass filter,} \label{eq:gamma_bar_a} \\
&\prod_{i=1}^{n} r_i |j\omega_{d_x} + \omega_{c_{1_i}}| \, |j\omega_{d_x} + \omega_{c_{2_i}}|, && \text{if } Q_i \text{ is a band-pass filter,} \label{eq:gamma_bar_b}
\end{empheq}
\end{subequations}
the RC-induced disturbance is then bounded by 
$|F_{d_R}(\omega_{d_x})|\leq \overline \gamma |F_d(\omega_{d_x})|\leq\overline \gamma\Delta_x$, where $|F_d(\omega_{d_x})|\leq \Delta_x$.
\end{proposition}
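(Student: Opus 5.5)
The argument is a Schur-complement reformulation of the scalar quadratic inequality (\ref{gamma_js}), followed by a direct modulus bound on the product form (\ref{distur_form}) (respectively (\ref{distur_form_BPF})). First, fix $i$ and $\omega_{d_x}$; then $a_i,b_i,c_i,e_i,f_i$ are known real constants. Expanding (\ref{gamma_js}) gives
\begin{equation*}
\gamma_i^2 = c_i^2(a_i^2+e_i^2) + 2c_i^2(a_ib_i+e_if_i)\beta_i + c_i^2(b_i^2+f_i^2)\beta_i^2 .
\end{equation*}
The condition $\gamma_i^2 - r_i^2<0$ is therefore exactly, by the Schur complement with respect to the $-1$ entry, the first matrix inequality in (\ref{LMI_gamma}). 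That inequality is affine in $\beta_i$ for fixed $r_i$, and affine in $(\beta_i,r_i^2)$ jointly, so $r_i^2$ can be minimized as a convex program.

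Next I would reduce the second condition to $\lambda_i\ge 0$. In the low-pass case $\lambda_i\geq 0$ is $\beta_i\le |j\omega_{d_i}+\omega_{c_i}|$, and since $e_i/f_i=-|j\omega_{d_i}+\omega_{c_i}|$ this is $e_i/f_i+\beta_i\le 0$. The strict version in (\ref{LMI_gamma}) is sufficient. The band-pass case is identical after substituting its $e_i,f_i$. This identification uses $f_i\neq 0$; if $\sin\alpha_i=0$, I would interpret the constraint directly as the bound on $\beta_i$.

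For feasibility, I pick any $\beta_i$ below that threshold, for instance $\beta_i=0$. Then $\gamma_i^2$ is a finite number, so any $r_i^2>\gamma_i^2$ makes (\ref{LMI_gamma}) feasible. This is the ``large enough $r_i$'' claim. Because $\gamma_i^2\le r_i^2$, minimizing the feasible $r_i$ minimizes an upper bound on $|\gamma_i|$, and hence on the $i$-th factor of $F_{d_R}$.

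For the RC case $\omega_{d_x}=\omega_{d_i}$, I check that $\alpha_i=-2\pi$ from (\ref{rem_1_1}), so $\sin\alpha_i=0$ and $\cos\alpha_i=1$. Then $a_i=e_i=f_i=0$ and $b_i=1$, and $\gamma_i^2=c_i^2\beta_i^2$. The first LMI becomes $-r_i^2+c_i^2\beta_i^2<0$, which holds with $\beta_i=0$ for every $r_i>0$. The obstacle here is that $f_i=0$ makes the second condition degenerate. I would handle it as above: $\beta_i=0$ gives $\lambda_i=|Q_i(\omega_{d_i})^{-1}|>0$ directly. The band-pass phase bookkeeping, with its $\pi/2$ and $5\pi/2$ terms, reduces in the same way to $\alpha_i\in 2\pi\mathbb{Z}$, and I would verify this explicitly.

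Finally, to get the bound, I evaluate (\ref{distur_form}) at $s=j\omega_{d_x}$ and take moduli:
\begin{equation*}
|F_{d_R}(\omega_{d_x})| = \prod_{i=1}^n |\gamma_i|\,|j\omega_{d_x}+\omega_{c_i}|\;|F_d(\omega_{d_x})| \le \prod_{i=1}^n r_i |j\omega_{d_x}+\omega_{c_i}|\;|F_d(\omega_{d_x})| = \overline\gamma\,|F_d(\omega_{d_x})|,
\end{equation*}
and then use $|F_d(\omega_{d_x})|\le\Delta_x$. The band-pass case is the same computation with (\ref{distur_form_BPF}) and two filter factors per loop, which gives (\ref{eq:gamma_bar_b}).
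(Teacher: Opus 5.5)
Your proposal is correct and follows essentially the same route as the paper, whose justification is the derivation in the section preceding the proposition. There, the quadratic expression for $|\gamma_i|^2$ is turned into the first inequality of (\ref{LMI_gamma}) by a Schur complement, the second inequality encodes $\lambda_i\geq 0$, $\alpha_i=-2\pi$ at $\omega_{d_x}=\omega_{d_i}$ gives $\gamma_i=c_i\beta_i$, and the bound comes from taking moduli in the product form (\ref{distur_form}) or (\ref{distur_form_BPF}), exactly as you do. Your explicit handling of the degenerate case $f_i=0$, where $e_i/f_i$ is $0/0$ and you check $\lambda_i=|Q_i(\omega_{d_i})^{-1}|>0$ directly, fills a small point the paper leaves implicit.
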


\section{\textbf{Input to State Stability of RC system}}\label{stability}
Next, we will consider the stability of the closed-loop system (\ref{state space}) under both RC and nominal control. 
Given Proposition \ref{proposition1} and denoting $ B_d= \begin{pmatrix}0&0&0&  1\end{pmatrix}^T$,
we define the vector norm $\|B_s\overline F_d(s)\|\leq \|B_d\overline \gamma F_d(s)\|$.
We consider the following neutral distributed delay system, where $B_s\overline F_d$ is replaced by  $B_d\overline \gamma F_d$  in (\ref{state space}).
\begin{equation}\label{st1}
\resizebox{0.98\linewidth}{!}{$
\frac{\textrm{d}}{\textrm{d}t}(x+D_m^{-1}A_1\int_{t-\tau_1}^tx(\mu)\textrm{d}\mu+D_m^{-1}A_2\int_{t-\tau_2}^tx(\mu)\textrm{d}\mu+D_m^{-1}A_3\int_{t-\tau_1-\tau_2}^tx(\mu)\textrm{d}\mu)=D_m^{-1}(\bar Ax+B_d\overline \gamma F_d)
$}
\end{equation}The purpose of this reformulation is to apply the Integral Inequality in Lemma 3
 The following lemma will be useful for the analysis of exponential stability and Input-to-State Stability (ISS).
\begin{lemma}\label{input-to-state}
\cite{Fridman1} Let $V:\,[0,\infty)\rightarrow \mathbb{R}^+$ be an absolutely continuous function. If there exist $\alpha>0$ and $b>0$ such that the derivative of $V$ satisfies the following inequality
\begin{equation}
\frac{\textrm{d}}{\textrm{d}t}V+2\alpha V-b\|w\|^2\leq 0
\end{equation} 
then it follows that for all $\|w\|\leq \Delta$
\begin{equation}
V(t)\leq e^{-2\alpha (t-t_0)}V(t_0)+(1-e^{-2\alpha (t-t_0)})\frac{b}{2\alpha}\Delta^2
\end{equation}
for $t\rightarrow \infty$.
\end{lemma}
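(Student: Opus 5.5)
The statement immediately above is Lemma~\ref{input-to-state}, so I would prove it by a direct comparison (Gr\"onwall-type) argument on the scalar differential inequality. The absolute continuity of $V$ is what allows us to multiply by an integrating factor and integrate.

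\textbf{Step 1.} Fix $t_0$ and use $\|w(t)\|\leq\Delta$ to obtain, for almost every $t\geq t_0$,
\begin{equation*}
\frac{\textrm{d}}{\textrm{d}t}V(t)+2\alpha V(t)\leq b\|w(t)\|^2\leq b\Delta^2 .
\end{equation*}

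\textbf{Step 2.} Introduce $W(t)=e^{2\alpha (t-t_0)}V(t)$. It is absolutely continuous on compact intervals, as the product of an absolutely continuous function with a smooth one. Its derivative satisfies, almost everywhere,
\begin{equation*}
\dot W(t)=e^{2\alpha (t-t_0)}\big(\dot V(t)+2\alpha V(t)\big)\leq b\Delta^2 e^{2\alpha (t-t_0)}.
\end{equation*}

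\textbf{Step 3.} Integrate from $t_0$ to $t$, using the fundamental theorem of calculus for absolutely continuous functions. This gives
\begin{equation*}
W(t)-W(t_0)\leq \frac{b\Delta^2}{2\alpha}\big(e^{2\alpha (t-t_0)}-1\big).
\end{equation*}

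\textbf{Step 4.} Multiply through by $e^{-2\alpha (t-t_0)}>0$. This yields
\begin{equation*}
V(t)\leq e^{-2\alpha (t-t_0)}V(t_0)+\big(1-e^{-2\alpha (t-t_0)}\big)\frac{b}{2\alpha}\Delta^2
\end{equation*}
for all $t\geq t_0$, which is the claimed bound.

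As $t\rightarrow\infty$, the bound tends to the ultimate bound $\frac{b}{2\alpha}\Delta^2$. The phrase ``for $t\rightarrow\infty$'' in the statement should be read this way, since the inequality itself holds for every $t\geq t_0$.

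The only delicate point is Step 2. The differential inequality holds only almost everywhere, so one must make sure integration is legitimate. Absolute continuity of $W$ settles this directly, because $W(t)-W(t_0)=\int_{t_0}^t\dot W$. No comparison lemma for merely continuous functions is needed.

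Nonnegativity of $V$ and positivity of $\alpha$ and $b$ are not used beyond dividing by $\alpha$. They matter only for interpreting the result as exponential convergence to the $\sqrt{b/(2\alpha)}\,\Delta$-level set, which is how it will later serve ISS.
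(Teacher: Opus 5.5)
Your integrating-factor argument is correct, and it is the standard comparison argument for this bound; the paper gives no proof of its own here and simply cites the lemma from Fridman's work, so there is nothing different to compare against. One small correction to your closing remark: the sign of $b$ is used, since $b\geq 0$ is exactly what lets you pass from $b\|w\|^2$ to $b\Delta^2$ in Step 1.
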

We introduce the following lemmas as a basis for our stability proof.
\begin{lemma}\label{Extended Jensen}
 For any positive definite matrix $R\in \mathbb{R}^{m\times m}$, a scalar $\tau>0$ and a vector function $w\,:\,[-\tau,0]\rightarrow\mathbb{R}^m$ such that the integrations concerned are well defined, the following holds:
\begin{equation}\label{Jensen}
\int_{-\tau}^0w^T(\mu)Rw(\mu)\textrm{d}\mu\geq \frac{1}{\tau}\int_{-\tau}^0w^T(\mu)\textrm{d}\mu R\int_{-\tau}^0w (\mu)\textrm{d}\mu,
\end{equation}
\begin{equation}\label{Extend Jensen}
\int_{-\tau}^0\int_{t+\theta}^tw^T(\mu)Rw(\mu)\textrm{d}\mu\textrm{d}\theta\geq \frac{2}{\tau^2}\left(\int_{-\tau}^0\int_{t+\theta}^tw^T(\mu)\textrm{d}\mu\textrm{d}\theta\right)R\left(\int_{-\tau}^0\int_{t+\theta}^tw(\mu)\textrm{d}\mu\textrm{d}\theta\right).
\end{equation}
\end{lemma}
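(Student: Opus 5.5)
Both inequalities in Lemma \ref{Extended Jensen} are instances of one fact: for a nonnegative measure of total mass $\kappa$, the weighted average of the quadratic form $w^TRw$ dominates the quadratic form of the average. I plan to prove this once, using a pointwise Schur-complement positivity argument that is then integrated, rather than expanding components. The first inequality corresponds to mass $\kappa=\tau$ on an interval. The second corresponds to mass $\kappa=\tau^2/2$ on a triangle.

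\textbf{Step 1 (pointwise matrix inequality).} Since $R>0$, $R^{-1}$ exists and is positive definite. For every fixed argument $\mu$, consider the block matrix
\[
M(\mu)=\begin{pmatrix} w^T(\mu)Rw(\mu) & w^T(\mu)\\ w(\mu) & R^{-1}\end{pmatrix}.
\]
Its Schur complement with respect to the block $R^{-1}$ is $w^TRw-w^T(R^{-1})^{-1}w=0\ge 0$. Hence $M(\mu)\ge 0$ for every $\mu$.

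\textbf{Step 2 (first inequality).} I integrate $M(\mu)$ over $\mu\in[-\tau,0]$. Positive semidefiniteness is preserved under integration against a nonnegative measure, because $v^T\big(\int M\big)v=\int v^TMv\ge 0$ for every fixed vector $v$. This gives
\[
\begin{pmatrix} \int_{-\tau}^0 w^TRw\,\textrm{d}\mu & \int_{-\tau}^0 w^T\textrm{d}\mu\\ \int_{-\tau}^0 w\,\textrm{d}\mu & \tau R^{-1}\end{pmatrix}\ge 0 .
\]
The lower-right block $\tau R^{-1}$ is positive definite, so the Schur complement lemma applies again. It yields $\int w^TRw\,\textrm{d}\mu-\frac{1}{\tau}\int w^T\textrm{d}\mu\,R\int w\,\textrm{d}\mu\ge 0$, which is (\ref{Jensen}).

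\textbf{Step 3 (second inequality).} I repeat Step 2 on the triangular domain $\mathcal{D}=\{(\theta,\mu):\,-\tau\le\theta\le 0,\ t+\theta\le\mu\le t\}$ with the product Lebesgue measure $\textrm{d}\mu\,\textrm{d}\theta$. Here $w$ is understood on $[t-\tau,t]$, and the hypothesis that the integrals are well defined is exactly what makes the integrals below meaningful.

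Integrating $M(\mu)$ over $\mathcal{D}$ again produces a positive semidefinite block matrix. Its lower-right block is $\mathrm{area}(\mathcal{D})\,R^{-1}$, where
\[
\mathrm{area}(\mathcal{D})=\int_{-\tau}^0\int_{t+\theta}^t \textrm{d}\mu\,\textrm{d}\theta=\int_{-\tau}^0(-\theta)\,\textrm{d}\theta=\frac{\tau^2}{2}.
\]
Taking the Schur complement then gives the factor $\big(\tfrac{\tau^2}{2}\big)^{-1}=\tfrac{2}{\tau^2}$, which is exactly (\ref{Extend Jensen}).

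\textbf{Expected obstacle.} There is no deep difficulty here. The only points needing care are bookkeeping: correctly computing the mass $\tau^2/2$ of the triangle, and justifying that positive semidefiniteness survives integration, which follows from the linearity argument in Step 2. A fallback for either inequality is to factor $R=R^{1/2}R^{1/2}$ and apply the Cauchy--Schwarz inequality to $\int \langle v,R^{1/2}w\rangle$ over a unit vector $v$, normalized by the total mass.
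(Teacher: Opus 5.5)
The paper gives no proof of its own; it cites the classical Jensen inequality and \cite{Sun09}. Your Schur-complement argument is correct and is essentially the standard proof in that literature: you show $\begin{pmatrix} w^TRw & w^T\\ w & R^{-1}\end{pmatrix}\ge 0$ pointwise, integrate over the interval (mass $\tau$) or the triangle (mass $\tau^2/2$), and take the Schur complement again; your remark that $w$ must be understood on $[t-\tau,t]$ in the second inequality also correctly repairs a domain mismatch in the statement as written.
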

Inequality (\ref{Jensen}) corresponds to Jensen's inequality, and  (\ref{Extend Jensen}) corresponds to Extended Jensen's inequality  \cite{Sun09}.
\begin{lemma}\label{Wintinger}
(Wirtinger inequality \cite{Seuret13}) For a given  positive definite matrix $R\in \mathbb{R}^{m\times m} $ and any continuously differentiable function $w\,:\, [a,\,b]\rightarrow \mathbb{R}^m$ ($w$ is redefined from Lemma \ref{Extended Jensen}), the following inequality holds:
\begin{equation}
\int_a^b\dot w (\mu)R\dot w(\mu)\textrm{d}\mu\geq \frac{1}{b-a}\Omega_0^TR\Omega_0+\frac{3}{b-a}\Omega_1^TR\Omega_1,
\end{equation}
where $\Omega_0=w(b)-w(a)$, $\Omega_1=w(b)+w(a)-\frac{2}{b-a}\int_a^bw(\mu)\textrm{d}\mu$.
\end{lemma}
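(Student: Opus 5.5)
We prove the Wirtinger inequality by projecting $\dot w$, in the $R$-weighted $L_2$ inner product on $[a,b]$, onto the constants and the first Legendre polynomial, and then discarding the nonnegative energy of the residual. Throughout, the integrand $\dot w(\mu)R\dot w(\mu)$ in the statement is read as the quadratic form $\dot w^T(\mu)R\dot w(\mu)$, as in Lemma \ref{Extended Jensen}. Write $h=b-a$ and introduce the shifted Legendre polynomial
\[
p_1(\mu)=\frac{2\mu-a-b}{h}.
\]
It satisfies $\int_a^b p_1(\mu)\,\textrm{d}\mu=0$, $\int_a^b p_1^2(\mu)\,\textrm{d}\mu=h/3$ and $\dot p_1=2/h$.

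\emph{Step 1: projection coefficients.} The projection onto the constants is immediate:
\[
\int_a^b\dot w(\mu)\,\textrm{d}\mu=w(b)-w(a)=\Omega_0 .
\]
For the projection onto $p_1$ we integrate by parts, using $p_1(b)=1$, $p_1(a)=-1$:
\[
\int_a^b p_1(\mu)\dot w(\mu)\,\textrm{d}\mu=\big[p_1 w\big]_a^b-\frac{2}{h}\int_a^b w(\mu)\,\textrm{d}\mu=w(b)+w(a)-\frac{2}{h}\int_a^b w(\mu)\,\textrm{d}\mu=\Omega_1 .
\]
This is the only place where continuous differentiability of $w$ is used.

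\emph{Step 2: residual.} Define the residual
\[
z(\mu)=\dot w(\mu)-\frac{1}{h}\Omega_0-\frac{3}{h}p_1(\mu)\Omega_1 .
\]
Since $R>0$, we have $\int_a^b z^T(\mu)Rz(\mu)\,\textrm{d}\mu\geq 0$. Expand this integral and evaluate each term with Step 1.
\begin{itemize}
\item The cross term between $\dot w$ and the constant part gives $-\frac{2}{h}\Omega_0^TR\Omega_0$.
\item The cross term between $\dot w$ and the $p_1$ part gives $-\frac{6}{h}\Omega_1^TR\Omega_1$.
\item The square of the constant part contributes $\frac{1}{h^2}\,h\,\Omega_0^TR\Omega_0=\frac{1}{h}\Omega_0^TR\Omega_0$.
\item The square of the $p_1$ part contributes $\frac{9}{h^2}\cdot\frac{h}{3}\Omega_1^TR\Omega_1=\frac{3}{h}\Omega_1^TR\Omega_1$.
\item The mixed constant/$p_1$ term vanishes because $\int_a^b p_1=0$.
\end{itemize}
Collecting terms gives
\[
0\leq \int_a^b \dot w^T(\mu)R\dot w(\mu)\,\textrm{d}\mu-\frac{1}{h}\Omega_0^TR\Omega_0-\frac{3}{h}\Omega_1^TR\Omega_1 ,
\]
which is exactly the claimed inequality.

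There is no real obstacle. The only point needing care is the integration by parts in Step 1, specifically the sign and the factor $2/h$ producing the $\int_a^b w$ term in $\Omega_1$; everything else is routine orthogonality bookkeeping. If the weight $R$ in the cross terms looks awkward, an equivalent route is to write $R=R^{1/2}R^{1/2}$ and apply the scalar Bessel inequality componentwise to $R^{1/2}\dot w$. Dropping the $p_1$ component recovers the Jensen inequality of Lemma \ref{Extended Jensen}, which serves as a sanity check.
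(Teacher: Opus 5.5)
Your proof is correct and complete. The values $p_1(a)=-1$, $p_1(b)=1$, $\int_a^b p_1=0$ and $\int_a^b p_1^2=h/3$ are right, the integration by parts gives exactly $\Omega_1$, and the five terms of $\int_a^b z^TRz\geq 0$ collect to the stated inequality. Reading the integrand as $\dot w^TR\dot w$ is the intended meaning; the transpose is simply missing in the statement.

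The paper gives no proof to compare against. It imports the lemma from \cite{Seuret13} and uses it only as a black box when bounding $\dot V_2+2\alpha V_2$. Your route, the Bessel/Legendre projection argument, makes several things visible that the citation does not:
\begin{itemize}
\item No spectral estimate of Wirtinger type (with its $\pi^2$ constant) is needed. The constant $3$ is just $h/\int_a^b p_1^2$, which comes from the normalization of the first Legendre polynomial.
\item The equality case is transparent: $z\equiv 0$, i.e.\ $\dot w$ affine and $w$ quadratic. So the constant $3$ cannot be improved.
\item The hypotheses can be weakened to $R=R^T\geq 0$ and $w$ absolutely continuous with square-integrable derivative.
\item Differentiability is also used in the fundamental-theorem step $\int_a^b\dot w=\Omega_0$, not only in the integration by parts as you state.
\item The argument is hierarchical. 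Subtracting further orthogonal polynomials from $\dot w$ gives sharper inequalities of the same form, while dropping $p_1$ gives Jensen's inequality (\ref{Jensen}) applied to $\dot w$, as you observe.
\end{itemize}
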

Firstly, let us define two column vectors:
\begin{equation}
\begin{array}{l}
\xi=\textrm{col}\{x,\, x(t-\tau_1),\, x(t-\tau_2),\, x(t-\tau_1-\tau_2),\\ \displaystyle\frac{1}{\tau_1}\displaystyle\int_{t-\tau_1}^tx(\mu)\textrm{d}\mu,\, \frac{1}{\tau_2}\displaystyle\int_{t-\tau_2}^tx(\mu)\textrm{d}\mu,\, \frac{1}{\tau_1+\tau_2}\displaystyle\int_{t-\tau_1-\tau_2}^tx(\mu)\textrm{d}\mu,\,
  \overline \gamma F_d
\},\\\\
\xi_1=\textrm{col}\{x,\, \displaystyle\int_{t-\tau_1}^tx(\mu)\textrm{d}\mu,\, \displaystyle\int_{t-\tau_2}^tx(\mu)\textrm{d}\mu,\, \displaystyle\int_{t-\tau_1-\tau_2}^tx(\mu)\textrm{d}\mu\}.
\end{array}
\end{equation}
In the following, we denote $G^{m,8}\{{\scriptstyle{x,y}}=I_4\}$ or $G^{m,4}\{{\scriptstyle{x,y}}=I_4\}$ as matrices, where superscript $m$ is such that $m\times 4$ are the row numbers of the two matrices where $4$ corresponds to the row number of vector $x$.
Superscript 8 and 4 correspond to the number of variables in $\xi$ and $\xi_1$, respectively, and $8\times 4$, $4\times 4$ are the number of columns of the two matrices. 
${\scriptstyle{x,y}}=I_4$ means the matrix corresponding to the $x$-th  index in $m$ and the $y$-th variable index in the superscript $8$ or $4$  of matrix $G$ is an identity matrix.
All other terms in the matrices are zero matrices with appropriate dimensions.

 Let positive definite matrices $R_1,\,R_2,\,R_3\in\mathbb{R}^{4\times 4}>0$ and $S_1,\,S_2,\,S_3\in\mathbb{R}^{4\times 4}>0$, 
we define the following matrices:
\begin{equation}
\resizebox{0.98\linewidth}{!}{$
\begin{split}
G_0^{4,8}&=\{{\scriptstyle{1,1}}=I_4,\, {\scriptstyle{1,5}}=\tau_1D_m^{-1}A_1,\, {\scriptstyle{1,6}}=\tau_2D_m^{-1}A_2,\, {\scriptstyle{1,7}}=(\tau_1+\tau_2)D_m^{-1}A_3,\, {\scriptstyle{2,5}}=\tau_1I_4,\\ & \hspace{.3in}{\scriptstyle{3,6}}=\tau_2I_4,\, {\scriptstyle{4,7}}=(\tau_1+\tau_2)I_4
\};\\
G_1^{4,8}&=\{{\scriptstyle{1,1}}=D_m^{-1}\bar A,\, {\scriptstyle{1,8}}=D_m^{-1}B_d,\,
{\scriptstyle{2,1}}=I_4,\,{\scriptstyle{2,2}}=-I_4,\,{\scriptstyle{3,1}}=I_4,\, {\scriptstyle{3,3}}=-I_4,\, {\scriptstyle{4,1}}=I_4,\,{\scriptstyle{4,4}}=-I_4\};\\
G_2^{1,8}&=\{ {\scriptstyle{1,1}}=D_m^{-1}A,\,{\scriptstyle{1,2}}=D_m^{-1}A_1,\,{\scriptstyle{1,3}}=D_m^{-1}A_2,\,{\scriptstyle{1,4}}=D_m^{-1}A_3,\, {\scriptstyle{1,8}}=D_m^{-1}B_d
\};\\
G_3^{6,8}&=\{ {\scriptstyle{1,1}}=I_4,\,{\scriptstyle{1,2}}=-I_4,\,{\scriptstyle{2,1}}=I_4,\,{\scriptstyle{2,2}}=I_4,\,{\scriptstyle{2,5}}=-2I_4,\,{\scriptstyle{3,1}}=I_4,\,{\scriptstyle{3,3}}=-I_4,\, {\scriptstyle{4,1}}=I_4,\, {\scriptstyle{4,3}}=I_4,\\ &\hspace{.3in}{\scriptstyle{4,6}}=-2I_4,\, {\scriptstyle{5,1}}=I_4,\,{\scriptstyle{5,4}}=-I_4,\,{\scriptstyle{6,1}}=I_4,\,{\scriptstyle{6,4}}=I_4,\, {\scriptstyle{6,6}}=-2I_4\};\\
G_4^{1,8}&=\{{\scriptstyle{1,1}}=I_4\};\\ G_5^{3,8}&=\{{\scriptstyle{1,2}}=I_4,\,{\scriptstyle{2,3}}=I_4,\, {\scriptstyle{3,4}}=I_4\};\\
G_6^{1,8}&=\{{\scriptstyle{1,8}}=1\},\\
G_a^{4,4}&=\{{\scriptstyle{1,1}}=I_4,\,{\scriptstyle{1,2}}=D_m^{-1}A_1,\, {\scriptstyle{1,3}}=D_m^{-1}A_2,\, {\scriptstyle{1,4}}=D_m^{-1}A_3,\, {\scriptstyle{2,2}}=I_4,\,{\scriptstyle{3,3}}=I_4,\, {\scriptstyle{4,4}}=I_4\};\\
G_b^{3,4}&=\{{\scriptstyle{1,1}}=\tau_1I_4,\,{\scriptstyle{1,2}}=-I_4,\,{\scriptstyle{2,1}}=\tau_2I_4,\, {\scriptstyle{2,3}}=-I_4,\, {\scriptstyle{3,1}}=(\tau_1+\tau_2)I_4,\,{\scriptstyle{3,4}}=-I_4\};\\
G_c^{3,4}&=\{{\scriptstyle{1,2}}=I_4,\,{\scriptstyle{2,3}}=I_4,\,{\scriptstyle{3,4}}=I_4\};\\
D_\tau&=\textrm{diag}\{\frac{e^{-2\alpha\tau_1}}{\tau_1}I_4,\, \frac{e^{-2\alpha\tau_2}}{\tau_2}I_4,\, \frac{e^{-2\alpha (\tau_1+\tau_2)}}{\tau_1+\tau_2}I_4\};\\
 R_m&=\textrm{diag}\{I_4,\, 3I_4\};\,
\tilde R_m=\textrm{diag}\{R_me^{-2\alpha\tau_1},\,R_me^{-2\alpha\tau_2},\,R_me^{-2\alpha(\tau_1+\tau_2)}\};\\
\tilde R_1&=\textrm{diag}\{R_1,\, R_1\};\,\tilde R_2=\textrm{diag}\{R_2,\, R_2\};\,\tilde R_3=\textrm{diag}\{R_3,\, R_3\};\\
\tilde R&=\textrm{diag}\{\tilde R_1,\,\tilde R_2,\,\tilde R_3\};\,
D_R=\textrm{diag}\{R_1,\, R_2,\, R_3\};\\
D_S&=\textrm{diag}\{S_1,\,S_2,\,S_3\};\,
D_e=\textrm{diag}\{e^{-2\alpha\tau_1}I_4,\,e^{-2\alpha\tau_2}I_4,\,e^{-2\alpha (\tau_1+\tau_2)}I_4\}.
\end{split}
$}
\end{equation}
Let us define
\begin{equation}
z=\begin{pmatrix}
x+D_m^{-1}A_1\int_{t-\tau_1}^tx(\mu)\textrm{d}\mu+D_m^{-1}A_2\int_{t-\tau_2}^tx(\mu)\textrm{d}\mu+D_m^{-1}A_3\int_{t-\tau_1-\tau_2}^tx(\mu)\textrm{d}\mu\\ \int_{t-\tau_1}^tx(\mu)\textrm{d}\mu\\ \int_{t-\tau_2}^tx(\mu)\textrm{d}\mu\\ \int_{t-\tau_1-\tau_2}^tx(\mu)\textrm{d}\mu
\end{pmatrix},
\end{equation}
 we can write $z=G_0\xi=G_a\xi_1$ and $\dot z=G_1\xi$. Let $P\in\mathbb{R}^{16\times 16}>0$, we  can write the following   as
\begin{equation}
\begin{array}{l}
V_1=z^TPz=\xi_1^TG_a^TPG_a\xi_1,\\
\dot V_1+2\alpha V_1=\xi^T(G_0^TPG_1+G_1^TPG_0+2\alpha G_0^TPG_0)\xi.
\end{array}
\end{equation}
Next, let the following Lyapunov Krosovskii functionals $V_2=V_{2_1}+V_{2_2}+V_{2_3}$ be expressed as
\begin{equation}
\begin{split}
V_2&=\tau_1\int_{-\tau_1}^0\int_{t+\theta}^te^{-2\alpha (t-\mu)}\dot x^T(\mu)R_1\dot x(\mu)\textrm{d}\mu\textrm{d}\theta+
\tau_2\int_{-\tau_2}^0\int_{t+\theta}^te^{-2\alpha (t-\mu)}\dot x^T(\mu)R_2\dot x(\mu)\textrm{d}\mu\textrm{d}\theta\\ &+
(\tau_1+\tau_2)\int_{-\tau_1-\tau_2}^0\int_{t+\theta}^te^{-2\alpha (t-\mu)}\dot x^T(\mu)R_3\dot x(\mu)\textrm{d}\mu\textrm{d}\theta,
\end{split}
\end{equation}
where $\alpha$ denotes the exponential delay rate. 
According to the Extended Jensen's inequality (\ref{Extend Jensen}), we can write
\begin{equation}
V_{2_1}\geq \frac{2}{\tau_1}e^{-2\alpha\tau_1}\left(\int_{-\tau_1}^0\int_{t+\theta}^t\dot x^T(\mu)\textrm{d}\mu\textrm{d}\theta\right)R_1\left(\int_{-\tau_1}^0\int_{t+\theta}^t\dot x(\mu)\textrm{d}\mu\textrm{d}\theta\right).
\end{equation}
Expanding the integral $\int_{t+\theta}^t\dot x^T(\mu)\textrm{d}\mu$ and applying the exchange of variable for integration, we can write
\begin{equation}
V_{2_1}\geq  \frac{2}{\tau_1}e^{-2\alpha\tau_1}\left(\tau_1x-\int_{t-\tau_1}^tx(\mu)\textrm{d}\mu\right)^TR_1\left(\tau_1x-\int_{t-\tau_1}^tx(\mu)\textrm{d}\mu\right).
\end{equation}
Similar inequalities can be expressed for $V_{2_2},\, V_{2_3}$. Then this yields 
\begin{equation}
V_2\geq \xi_1^TG_b^T2D_\tau D_RG_b\xi_1. 
\end{equation}
We can equivalently write $V_{2_1}$ as $V_{2_1}=-\tau_1\int_{t-\tau_1}^t(t-\mu-\tau_1)e^{-2\alpha (t-\mu)}\dot x(\mu)R_1\dot x(\mu)\textrm{d}\mu$.
Taking derivative for $V_{2_1}$ yields
\begin{equation}
\begin{split}
\dot V_{2_1}&=\dot x^T\tau_1^2R_1\dot x-\tau_1\int_{t-\tau_1}^t\frac{\partial}{\partial t}\big((t-\mu-\tau_1)e^{-2\alpha (t-\mu)}\dot x^T(\mu)R_1\dot x(\mu)\big)\textrm{d}\mu\\
&=\dot x^T\tau_1^2R_1\dot x-\tau_1\int_{t-\tau_1}^te^{-2\alpha (t-\mu)}\dot x^T(\mu)R_1\dot x(\mu)\textrm{d}\mu-2\alpha V_{2_1}.
\end{split}
\end{equation}
Then the following inequalities hold
\begin{equation}\label{V2}
\begin{array}{l}
\dot V_{2_1}+2\alpha V_{2_1}\leq \dot x^T\tau_1^2R_1\dot x-\tau_1e^{-2\alpha\tau_1}\int_{t-\tau_1}^t\dot x^T(\mu)R_1\dot x(\mu)\textrm{d}\mu,\\
\dot V_{2_2}+2\alpha V_{2_2}\leq \dot x^T\tau_2^2R_2\dot x-\tau_2e^{-2\alpha\tau_2}\int_{t-\tau_2}^t\dot x^T(\mu)R_2\dot x(\mu)\textrm{d}\mu,\\
\dot V_{3_1}+2\alpha V_{3_1}\leq \dot x^T(\tau_1+\tau_2)^2R_3\dot x-(\tau_1+\tau_2)e^{-2\alpha (\tau_1+\tau_2)}\int_{t-\tau_1-\tau_2}^t\dot x^T(\mu)R_3\dot x(\mu)\textrm{d}\mu.
\end{array}
\end{equation}
Substituting the right hand-side of (\ref{state space}), where both sides of the equation is multiplied by $D_m^{-1}$ and $B_s\overline F_d$ is replaced by $B_d\overline \gamma F_d$, for $\dot x$  in (\ref{V2}) and applying Wirtinger inequality in Lemma \ref{Wintinger} for the integral terms, yields
\begin{equation}
\begin{array}{l}
\dot V_2+2\alpha V_2\leq \xi^T\Big(G_2^T(\tau_1^2R_1+\tau_2^2R_2+(\tau_1+\tau_2)^2R_3)G_2-G_3^T\tilde R\tilde R_mG_3\Big)\xi.
\end{array}
\end{equation}
Next, let us define 
\begin{equation}\begin{split}
V_3&=\int_{t-\tau_1}^te^{-2\alpha (t-\mu)}x^T(\mu)S_1x(\mu)\textrm{d}\mu+\int_{t-\tau_2}^te^{-2\alpha (t-\mu)}x^T(\mu)S_2x(\mu)\textrm{d}\mu\\ &+\int_{t-\tau_1-\tau_2}^te^{-2\alpha (t-\mu)}x^T(\mu)S_3x(\mu)\textrm{d}\mu.
\end{split}\end{equation}
 Using Jensen's inequality (\ref{Jensen}) and taking derivative we get
\begin{equation}
\begin{array}{l}
V_3\geq \xi_1^TG_c^TD_\tau D_sG_c\xi_1,\\
\dot V_3+2\alpha V_3\leq \xi^T\Big(G_4^T(S_1+S_2+S_3)G_4-G_5^TD_eD_sG_5\Big)\xi.
\end{array}
\end{equation}
Let us define
\begin{equation}\label{LMI1}
\begin{split}
\Omega&=G_a^TPG_a+G_b^T2D_\tau D_RG_b+G_c^TD_\tau D_SG_c>0,\\
\Theta&=G_0^TPG_1+G_1^TPG_0+2\alpha G_0^TPG_0+G_2^T\big(\tau_1^2R_1+\tau_2^2R_2+(\tau_1+\tau_2)^2R_3\big)G_2-G_3^T\tilde R\tilde R_mG_3\\&+G_4^T(S_1+S_2+S_3)G_4-G_5^TD_eD_SG_5-bG_6^TG_6<0,
\end{split}
\end{equation}
we formulate our main results as follows: 
\begin{theorem}\label{MFRC}
Given control parameters $\bar m$, $g,\,p$, $\lambda_1,\,\lambda_2,\,\omega_{c_1},\,\omega_{c_2}\geq 0$ and 
 constant delays $\tau_1,\, \tau_2>0$, and tuning parameters $\alpha\geq 0,\, b\geq 0$,  if there exist $16\times 16$ matrix $P>0$, $4\times 4$ matrices $R_1,\,R_2,\,R_3> 0$ and $4\times 4$  matrices $S_1,\,S_2,\,S_3>0$, such that 
LMIs in (\ref{LMI1}) hold, then the RC system (\ref{state space}) is exponentially attracted by the ultimate bound for $t\rightarrow \infty$
\begin{equation}\label{ulti_bound_x1}
|\dot x_2|\leq \|x\|\leq  M|F_d|,
\end{equation}
where $M=\sqrt{\frac{b}{2\alpha \underline{\lambda}(\Omega)}}\overline \gamma$, and $\underline{\lambda}(\Omega)$ denotes the smallest eigenvalue of $\Omega$.
\end{theorem}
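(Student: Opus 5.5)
We take the Lyapunov--Krasovskii functional $V=V_1+V_2+V_3$ assembled above and check the two conditions of Lemma \ref{input-to-state}. These are a positive lower bound on $V$ in terms of $\|x\|^2$, and the dissipation inequality $\dot V+2\alpha V-b\|\overline\gamma F_d\|^2\le 0$ along solutions of (\ref{st1}). Here $w=\overline\gamma F_d$, which is the last component $G_6\xi$ of $\xi$. The bound $\|B_s\overline F_d\|\le\|B_d\overline\gamma F_d\|$ from Proposition \ref{proposition1} justifies replacing the RC-induced disturbance in (\ref{state space}) by $B_d\overline\gamma F_d$.

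For the lower bound, combine the three estimates already derived: $V_1=\xi_1^TG_a^TPG_a\xi_1$, $V_2\ge \xi_1^TG_b^T2D_\tau D_RG_b\xi_1$ (extended Jensen, Lemma \ref{Extended Jensen}), and $V_3\ge\xi_1^TG_c^TD_\tau D_SG_c\xi_1$ (Jensen). This gives $V\ge \xi_1^T\Omega\xi_1\ge\underline\lambda(\Omega)\|\xi_1\|^2\ge\underline\lambda(\Omega)\|x\|^2$, because $x$ is the first block of $\xi_1$. The first LMI $\Omega>0$ makes $\underline\lambda(\Omega)>0$.

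For the derivative, add $\dot V_1+2\alpha V_1=\xi^T(G_0^TPG_1+G_1^TPG_0+2\alpha G_0^TPG_0)\xi$ to the bounds for $\dot V_2+2\alpha V_2$ and $\dot V_3+2\alpha V_3$. The $V_1$ identity uses $z=G_0\xi$ and $\dot z=G_1\xi$, the latter read off directly from the neutral form (\ref{st1}). The $V_2$ bound comes from substituting $\dot x=G_2\xi$ and applying the Wirtinger inequality (Lemma \ref{Wintinger}) with the exponential weights $e^{-2\alpha\tau}$ pulled out. Subtracting $b\|w\|^2=b\,\xi^TG_6^TG_6\xi$ then gives $\dot V+2\alpha V-b\|w\|^2\le\xi^T\Theta\xi\le 0$ by the second LMI in (\ref{LMI1}). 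The step needing most care is the consistency of the Wirtinger application. The vector $\Omega_1$ must be expressed through the normalized integrals $\frac{1}{\tau}\int x$ that appear in $\xi$. That is exactly how $G_3$ is built: its rows $x-x(t-\tau)$ and $x+x(t-\tau)-2\frac1\tau\int x$, weighted by $\tilde R\tilde R_m$, include the factor $\frac{1}{\tau}$ times $\tau$ cancellation. We also need $\dot x$ to be given by the non-distributed form (\ref{state_1}), so that it is captured by $G_2$. We verify these index placements rather than recompute them.

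Lemma \ref{input-to-state} with $\|w\|\le\overline\gamma|F_d|$ now yields $V(t)\le e^{-2\alpha(t-t_0)}V(t_0)+(1-e^{-2\alpha(t-t_0)})\frac{b}{2\alpha}\overline\gamma^2|F_d|^2$. Hence the state is exponentially attracted to the set where $\underline\lambda(\Omega)\|x\|^2\le\frac{b}{2\alpha}\overline\gamma^2|F_d|^2$, that is, $\|x\|\le\sqrt{b/(2\alpha\underline\lambda(\Omega))}\,\overline\gamma|F_d|=M|F_d|$. Finally, $\dot x_2=x_3$ is a component of $x$, so $|\dot x_2|\le\|x\|$, which gives (\ref{ulti_bound_x1}). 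This requires $\alpha>0$; for $\alpha=0$ the bound degenerates and only boundedness follows.
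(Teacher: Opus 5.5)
Your proposal is correct and follows the paper's proof: the same functional $V=V_1+V_2+V_3$, the lower bound $V\ge\xi_1^T\Omega\xi_1\ge\underline{\lambda}(\Omega)\|x\|^2$, the dissipation inequality $\dot V+2\alpha V-b(\overline\gamma F_d)^2\le\xi^T\Theta\xi\le 0$, and Lemma \ref{input-to-state}; your observations that $x_3=\dot x_2$ gives $|\dot x_2|\le\|x\|$ and that $\alpha>0$ is needed for $M$ to be finite make explicit what the paper leaves implicit. One small correction to your closing aside: at $\alpha=0$ Lemma \ref{input-to-state} gives nothing directly, not even boundedness, but $\Omega$ and $\Theta$ depend continuously on $\alpha$, so strict feasibility at $\alpha=0$ persists for some small $\alpha>0$ and an ultimate bound of the stated form is still recovered.
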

\begin{proof}Let $V=V_1+V_2+V_3$, it follows $V\geq \xi_1^T\Omega\xi_1\geq \underline{\lambda}(\Omega)\|x\|^2$. Then $\dot V+2\alpha V-b(\overline \gamma F_d)^T(\overline \gamma F_d)\leq \xi^T\Theta\xi<0$ implies (\ref{ulti_bound_x1}) according to ISS criteria in Lemma \ref{input-to-state}.
\end{proof}

The following summarizes the design procedure for controller synthesis:
\begin{enumerate}
\item Compute $\beta_i$ in $\lambda_i$ that  minimizes the bound on the RC-induced-disturbance $\overline \gamma$   for  given choices of    $\omega_{c_i}$, $\tau_i$ and $\omega_{d_x}$ using LMIs in (\ref{LMI_gamma}). 
\item Tune the control parameters $g$, $p$ such that LMIs in (\ref{LMI1}) are feasible. Otherwise, restart from step 1) by 
increasing $\overline \gamma$ with smaller $\lambda_i$.  
\end{enumerate}

\section{\textbf{Experimental results}}
Fig. \ref{setup} shows an active vibration isolation system (AVIS) \cite{Tjepkema}, which is a hard-mount system with suspension frequency related to the  vertical translation mode of 24 Hz. 
The active system is used to suppress both indirect disturbances from floor vibrations and direct disturbances acting on the payload. 
Six voice-coil motor (VCM) actuators are connected to the payload via wire springs, forming a configuration resembling a Stewart platform. Accelerometers are attached to the Stewart platform to measure platform accelerations in the actuator directions. 
For damping control, acceleration signals are integrated to obtain velocity signals using weak integrators \cite{Beijen}, \cite{Sil potential}, \cite{Wouter filter}.
We only consider direct disturbance by applying wide-band noise with noise power of $2^{-4}\frac{\textrm{V}}{\sqrt{\textrm{Hz}}}$ to the actuators. We perform a coordinate transformation to isolate the payload's vertical (z) direction using active control. 
 The controllers are implemented on a digital signal processor running at a sampling frequency of $f_s=5000$ Hz.
 The mass, stiffness, and damping parameters have been identified as $m=7.6\,\textrm{kg}$, $k=180210\,\textrm{N/m}$, and $d=58.2\,\textrm{Ns/m}$, respectively. 
 
 Since we can measure the applied noise profile to the VCM actuator and the acceleration signal from the accelerometers for the payload, we can calculate the frequency response transfer function from these two measurement signals using spectral analysis as defined in  \cite{Pintelon}. 
Fig. \ref{sparse} shows that the measured open-loop system (actual plant) matches the plant model $G$ relatively accurately up to 700 Hz, where the parasitic dynamics of the platform suspension kick in.
For illustration, in the figures below, we denote $F_{d_C}=F_d+F_R$ as the compensated disturbance, where $F_R$ denotes the RC control in (\ref{F1_phi}). Since we can measure both $F_{d_C}$ and $F_d$, we can plot the measured compensated disturbance by RC using $\frac{F_{d_C}}{F_d}$. This quantity reveals the disturbance characterization due to RC.  The original system (\ref{linearSys})  can be written as $m\dot x_2=-kx_1-dx_2+F_{d_C}+F_N$.
\begin{figure}
\centering
  \includegraphics[width=.7\linewidth]{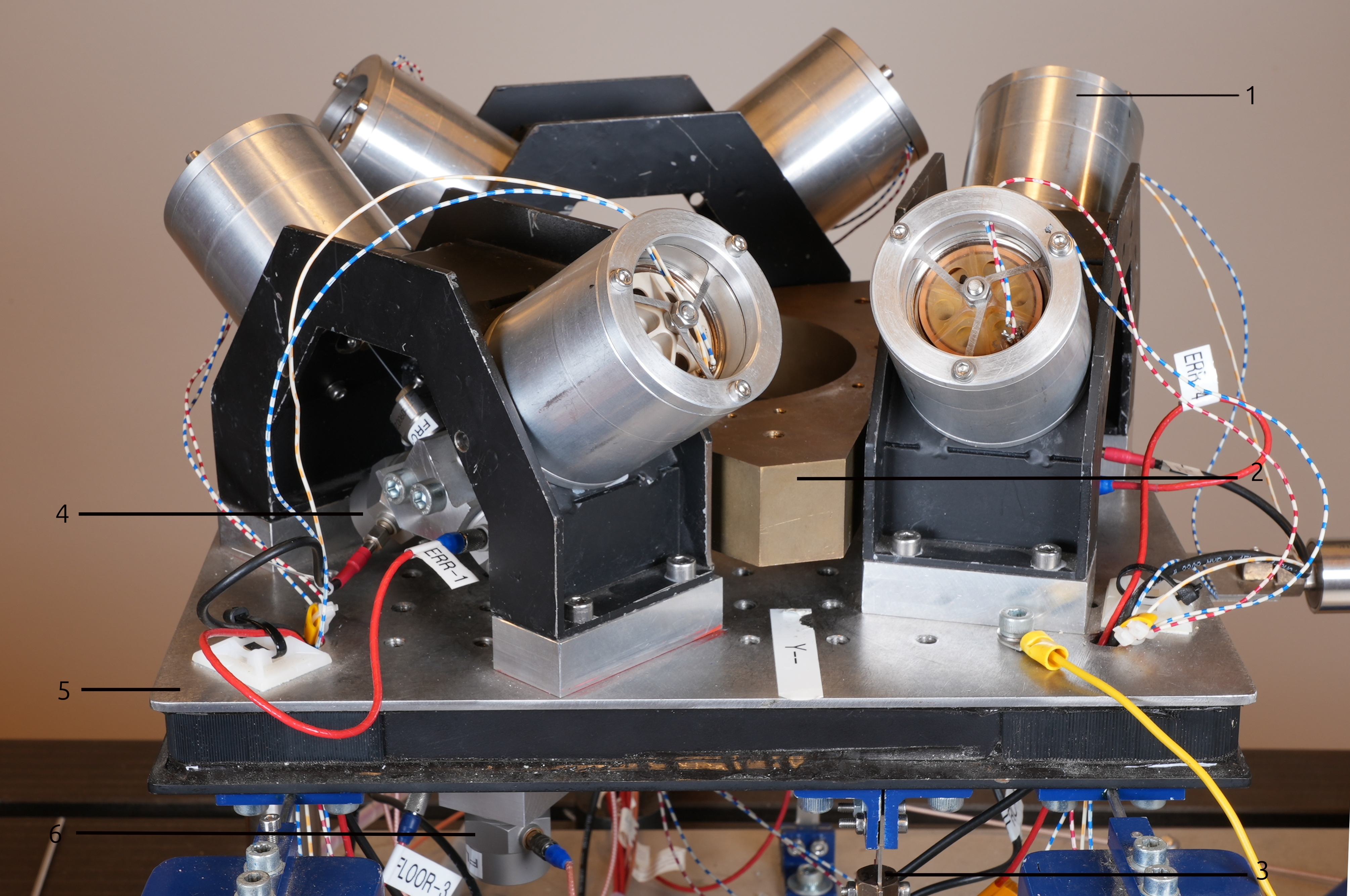}
  \caption{Test setup with the brass platform suspended to the cylindrical actuators, which are connected to the floor plate via the black bridge structure. Floor plate accelerations can be measured and excited using piezo actuators. 1: VCM actuator, 2: Payload, 3: Piezo electric actuators for floor excitation, 4: Payload accelerometers, 5: Floor, 6: Floor accelerometer 
  }
  \label{setup}
\end{figure}

In the following, we first show the experimental results for a single-period RC designed with an LPF (Low Pass Filter) and a BPF (Band Pass Filter) to verify the conservativeness of our Theorem. 
Secondly, we demonstrate the robustness of the proposed MPRC (RC+TDC) relative to the existing robust RC scheme under frequency uncertainties.  
Thirdly, we compare the proposed MPRC's performance with that of the existing cascaded and parallel MPRC schemes. 
Fourthly, we show optimization of the proposed MPRC at a non-repetitive frequency. 
Finally, we show that the combination of TDC and RC can be used to identify unknown periodic disturbances with higher accuracy than TDC alone, both in phase and magnitude. By utilizing an existing frequency identifier from the literature, the frequency of the estimated disturbance can be identified, which is then used in RC to achieve increased robustness against repetitive frequency uncertainty. 
Since we consider both a low-pass filter and a band-pass filter, we explain the notation of the cutoff frequencies as follows.
For the LPF case, $\omega_{c_1}$ and $\omega_{c_2}$ denote the low-pass cutoff frequencies of two LPFs in a two-period RC scheme, or in a one RC with one TDC scheme. 
For BPF case, $\omega_{c_1}$ and $\omega_{c_2}$ are used however to denote the low-pass cutoff frequency $\omega_{c_{1_i}}$ and the high-pass cutoff frequency $\omega_{c_{2_i}}$, respectively, of a single BPF in a single RC or a single TDC scheme.
\subsection{Single frequency RC with low-pass and band-pass filters} 
 Firstly, let us assess the level of conservatism of our result. Given the controller and filter parameters in the first row of Table \ref{tab:table1}, we want to choose a $\lambda$ value that produces maximum RC action. To achieve that, we find $\lambda=1.1792$ that yields $\beta=\gamma=0$. However, our result in LMIs (\ref{LMI1}) is not feasible with this $\lambda$  value. 
We can assess the conservatism of our result in this way. Because for single RC $n=1$, we have 3 by 3 matrices $A,\,A_1$. Then $A_{31}+A_{1_{31}}=-(k+p)\omega_{c_1}+\omega_{c_1}\lambda_1 k=0$ for $p=0$ and $\lambda_1=1$.
   Since we have chosen $p=0$, the system matrix $A+A_1$ becomes unstable for $\lambda_1>1$. This indicates that our LMI feasibility is not conservative in this case, as $\lambda_{\textrm{Theorem}}=0.997$ is very close to the stability boundary condition. It is, however, possible to increase $\lambda$ beyond unity if we choose a significant  positive control parameter $p$. Since there is no position sensor mounted on the setup, we have not verified this case. Although acceleration can be integrated twice to obtain a position signal, achieving an absolutely accurate position is challenging in the presence of sensor noise and undefined initial conditions.  

We  evaluate controller performance for the two cases in Table \ref{tab:table1} below.
 Parameters in Table \ref{tab:table1} are chosen to satisfy stability conditions  in Theorem \ref{MFRC}. 
We denote $|\frac{\dot x_2}{F_d}|_m$ 
as the measured transfer function in the experiments and $|\frac{\dot x_2}{F_d}|_e=M$ as the estimated bound in (\ref{ulti_bound_x1}).

\begin{table}[h!]
  \begin{center}
    \caption{1 RC. $\bar m=7$, $g=1660$, $p=0$, $\omega_d=50$ Hz}
    \label{tab:table1}
\resizebox{\textwidth}{!}{
    \begin{tabular}{|c|c|c|c|c|c|c|c|c|} 
      \hline
     Filter&$\tau$ (s)&max $[\lambda_{\textrm{Theorem}},\,\lambda_{\textrm{Testbed}}]$ &  $\omega_{c_1}$ (Hz) &$\omega_{c_{2}}$ (Hz)& $|\frac{\dot x_2}{F_d}(\omega_d)|_{m}$ (dB) & $|\frac{\dot x_2}{F_d}(\omega_d)|_{e}$ (dB)         \\
     \hline
      LPF&  0.0185&0.997, 0.86&80&0 &-29.4&     -4.59   \\
      \hline
BPF&0.02&1.318, 1.318&80&25&-48&  -35.9 \\
     \hline      
    \end{tabular}
}
  \end{center}
\end{table}
 \begin{figure}
\centering
  \includegraphics[width=1\linewidth]{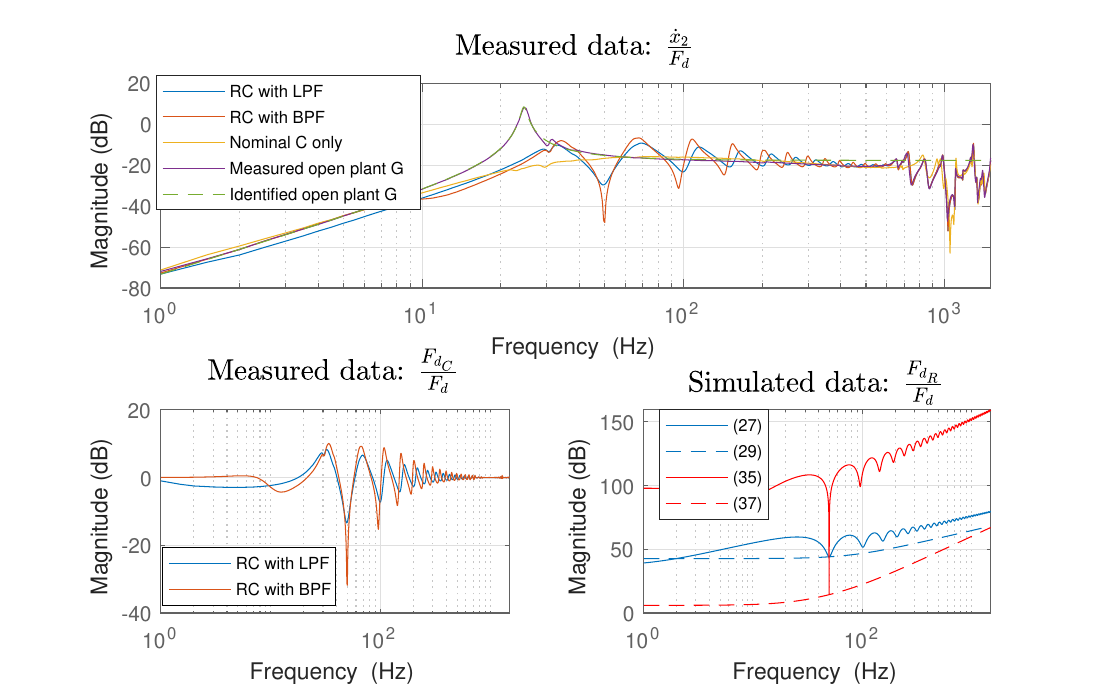}
  \caption{Performance in terms of the transfer function from $F_d$ to $\dot x_2$ for a single-period RC with an LPF and a BPF. The bottom-right subfigure plots the simulated transfer functions of the RC-induced-disturbances $\frac{F_{d_R}}{F_d}$ in equation (\ref{remark7_1}) for LPFRC with blue line, in equation (\ref{band_1}) for BPFRC with red line, and their minimized value in terms of using $\overline \gamma$ obtained at $\omega_{d_x}=50$Hz,  (\ref{eq:gamma_bar_a}) in yellow dashed line and (\ref{eq:gamma_bar_b}) in purple dashed line, respectively.
The bottom-left subfigure shows experimentally measured data of the compensated disturbance with and without RC. The upper subfigure shows the experimentally measured transfer function of acceleration to the disturbance (\ref{Fd_x2}).   
  }
  \label{sparse}
\end{figure}
For a single RC with a LPF, we can theoretically choose a maximum $\lambda=0.997$ to attain feasibility of the LMIs (\ref{LMI1}) in Theorem \ref{MFRC}, but in the experiment, we can only choose up to $\lambda=0.86$ before the system becomes unstable. This is mainly due to the destabilizing effect of the parasitic dynamics after 700Hz.  With parameters 
given by the first row of Table \ref{tab:table1},
 we choose   $\omega_{d_x}=\omega_d=50$Hz in the LMIs  (\ref{LMI_gamma}) and obtain  $\beta=160.47$, $\gamma=0.2707$ and $\overline \gamma=44.108$ dB for the given $\lambda_{\textrm{Testbed}}$ value.
Then we find the feasibility of LMIs in (\ref{LMI1}) with
 $\alpha=1.596$, $b=8\times 10^{-11}$, which yields $\underline{\lambda}(\Omega)=1.8\times 10^{-6}$. This allows to calculate $M$ for $|\frac{\dot x_2}{F_d}|_e$.

For the BPF case, we would get a perfect disturbance cancellation ($\gamma=F_{d_R}=0$) if we choose $\omega_{d_x}=50$Hz with $\lambda=1.318$. We have applied this optimum value in the experiment. 
  We  choose $\omega_{d_x}=50.00018$ Hz in the LMIs  (\ref{LMI_gamma}) which is  close to the disturbance frequency $\omega_d=50$ Hz.
   We then obtain $\beta=-8.7\times 10^{-11}$, $\gamma=2.5337\times 10^{-5}$ and $\overline \gamma=14.44$ dB.
LMI (\ref{LMI1})  is feasible with  
  $\alpha=5$, $b=2\times 10^{-11}$, which gives  $\underline{\lambda}(\Omega)=2.16\times 10^{-7}$ and hence the obtained $|\frac{\dot x_2}{F_d}|_e$. The larger $\alpha$ in this case implies a larger exponential decay rate.
The estimated bound on the output transfer function $|\frac{\dot x_2}{F_d}|_e$  at the target frequency $\omega_d$ for the LPF case shows more conservatism than the BPF case. This is because the optimization spans a larger frequency range. The LPF case clearly performs better over low frequencies below 10 Hz.

The bottom-left subfigure  in Figure \ref{sparse} shows the compensated disturbance $\frac{F_{d_C}}{F_d}$, using experimentally measured data for the LPF case, the BPF case, and the case using only nominal control without RC. This shows RC attenuates disturbance at the desired frequency at the expense of amplification at intermediate frequencies.
BPFRC exhibits greater attenuation at 50 Hz but greater amplification than LPFRC at intermediate frequencies. This can be explained by the RC-induced disturbance by BPFRC being lower at the target frequency but higher at intermediate frequencies, as shown in the bottom-right subfigure. 
In the bottom-right subfigure of  figure \ref{sparse},
we simulate the RC-induced disturbance $\frac{F_{d_R}}{F_d}$ in equation (\ref{remark7_1}) for LPFRC, and in equation (\ref{band_1}) for BPFRC, and then simulate their minimized value in terms of the minimized $\overline \gamma$ obtained at 50 Hz in (\ref{eq:gamma_bar_a}) and in (\ref{eq:gamma_bar_b}), respectively.
The RC-induced disturbance by BPFRC in the red line is less than that by LPFRC in the blue line at  50 Hz, but is higher at other frequencies. 
The upper-half subfigure  shows the performance of acceleration against disturbance with nominal control as well, where $ F_N$ gives the nominal control in $m\dot x_2=-kx_1-dx_2+F_{d_C}+F_N$. 
The nominal control is used to damp the mechanical resonance, defined by system parameters $k,\,d$, as well as the compensated disturbance $F_{d_C}$.  The output transfer functions $\frac{\dot x_2}{F_d}$ show that the RC has better performance than nominal control at target frequencies but is less favourable at intermediate frequencies.

\subsection{Robust RC}
This section shows the improvement of the proposed approach in the  robustness of RC to frequency uncertainties at the designed RC frequency by combining RC with TDC, the first row in Table \ref{tab:table2}. This is then compared  with the robust high order RC proposed in \cite{Steinbuch}, which is essentially a two MPRC design for a single period disturbance, as explained in Remark \ref{remark1},   the second row in Table \ref{tab:table2}. The cutoff frequencies $\omega_{c_1}$, $\omega_{c_2}$ for both LPFs are set at  80 Hz.
\begin{table}[h!]
  \begin{center}
    \caption{Robust RC. $\bar m=7$, $g=1660$, $p=0$, $\omega_{d_1},\omega_{d_2}=50$ Hz, $ \omega_{c_1},\omega_{c_2}=80$ Hz}
    \label{tab:table2}
    \begin{tabular}{|c|c|c|c|c|} 
      \hline
    Filter& Case &$\tau_1,\,\tau_2$ (s)& max $[\lambda_1,\,\lambda_2]_{\textrm{Theorem}}$ &$[\lambda_1,\,\lambda_2]_{\textrm{Testbed}}$\\
      \hline
LPF& 1 RC+1 TDC&  0.0182, 0.0002&1.0013, 1 & 0.8, 0.4 \\
     \hline
    LPF&  2 RC&  0.0182, 0.0182&  0.84, 0.84 & 0.6, 0.6 \\
      \hline      
    \end{tabular}
  \end{center}
\end{table}

In the experiment, we choose RC gains such that $\lambda_1+\lambda_2$ is the same in both cases for fair comparison. The RC gains used in the testbed are all below their maximum, which are obtained by Theorem \ref{MFRC}. Larger than these values will cause instability due to parasitic dynamics.
Note, due to TDC in the second loop, we can choose $\lambda_1>1$ in Theorem \ref{MFRC} in this case, compared to the single period RC with a low-pass filter in the previous section, where we could only choose it to be 0.997.
In Figure \ref{fig:3}, we plot the compensated disturbance in the bottom subfigures and the transfer function of acceleration to disturbance in the upper subfigure.  The transfer function of acceleration to applied noise is obtained by damping the compensated disturbance using nominal control. 
In the figure,
both (1 RC+1 TDC) and 2RC (2 RC) yield almost identical values at 50 Hz, which are lower than those with LPF, which is designed using the parameters given in the first row of Table \ref{tab:table1}. Both methods allow widening the valley at 50Hz rather than just 1 RC. However, 2RC induces the most significant disturbance at all the intermediate frequencies. 
1RC1TDC yields a lower disturbance at 50 Hz than 1 RC, but a higher disturbance at intermediate frequencies above 100 Hz, due to the same cut-off frequencies of the two filters at 80 Hz.
 There are small regions between 42Hz and 56Hz where 2RC produces lower disturbance than  1RC1TDC, as shown in the bottom-right subfigure. Apart from these frequencies, 1RC1TDC produces less RC-compensated disturbance than 2RC, including the intermediate frequencies at around 30 Hz and 70 Hz.
Summarizing, high-order RC proposed in \cite{Steinbuch} (2RC in this case) performs slightly better in the small neighbourhood of the RC frequency, but the proposed 1RC1TDC scheme performs better for the rest of the frequencies,  with significantly less amplification in the intermediate frequencies. 
  
\begin{figure}[H]
\centering
  \includegraphics[width=1\linewidth]{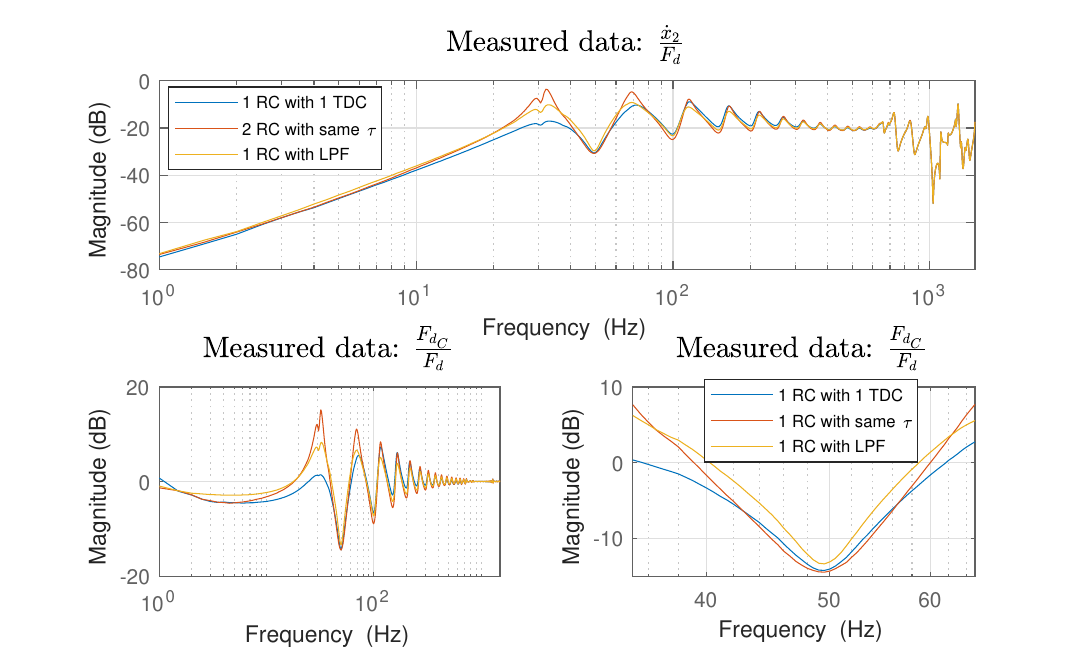}
  \caption{Comparison of RC+TDC with high order RC \protect\cite{Steinbuch} in robustness against frequency uncertainties. The bottom subfigures show experimentally measured data of the compensated disturbance $\frac{F_{d_C}}{F_d}$. The upper subfigure shows the experimentally measured transfer function of acceleration to the applied noise.}
  \label{fig:3}
\end{figure}

\subsection{Multi-period RC}

\subsubsection{Proposed 2 RC}\label{2RC_sec}
\begin{table}[h!]
  \begin{center}
    \caption{2 RC, $\bar m=5$, $g=3000$, $p=0$, $\omega_{d_1}=20$ Hz, $\omega_{d_2}=50$ Hz}
    \label{tab:table3_1}
    \begin{tabular}{|c|c|c|c|c|c|} 
      \hline
  Filter&   Case &$\tau_1,\,\tau_2$ (s)&max $[\lambda_1,\,\lambda_2]_{\textrm{Theorem}}$& max $[\lambda_1,\,\lambda_2]_{\textrm{Testbed}} $ &  $\omega_{c_1},\,\omega_{c_2}$ (Hz)        \\
     \hline
    LPF&  1& 0.0463, 0.0182&0.86, 0.9& 0.75, 0.75&40, 80  \\
      \hline
BPF& 2&  0.05, 0.0188&---& 1.25, 1.2& $\begin{matrix}\textrm{40, 10}\\ \textrm{80, 10}\end{matrix} $   \\
      \hline      
    \end{tabular}
  \end{center}
\end{table}
Table \ref{tab:table3_1} shows 2 RC in two cases. Disturbance frequencies are at 20 and 50 Hz.
Case 1 has one low-pass filter for each RC with low-pass cutoff frequencies at $\omega_{c_1}=$40 Hz and $\omega_{c_2}=$80 Hz, respectively. The maximum RC gains $\lambda_1,\,\lambda_2$  are obtained via stability conditions in LMIs (\ref{LMI1}) in Theorem \ref{MFRC}.  The maximum experimental RC gains, however, cannot be chosen as large as their theoretical maximums due to parasitic dynamics. 
Case 2 considers one band-pass filter for each RC. 
Theorem \ref{MFRC} is, however, not applicable to this case as it only considers one rather than two band-pass filter designs.
 For the first disturbance at 20 Hz, we design the low-pass cutoff frequency and high-pass cutoff frequency for the first BPF as $\omega_{c_1}=$40 Hz and $\omega_{c_2}=$10 Hz, respectively.
For the second disturbance at 50 Hz, they are designed for the second BPF to be $\omega_{c_1}=$80 Hz and $\omega_{c_2}=$10 Hz.
The RC gains in Case 2 are chosen to be equal to  $[\lambda_1,\lambda_2]_{\gamma_1,\gamma_2=0}=[1.25,\,1.2]$ for theoretical maximum attenuation $\gamma_1,\,\gamma_2=0$ as obtained from LMI tool (\ref{LMI_gamma}) for the BPF case, by setting $\omega_{d_x}=20$ Hz and $\omega_{d_x}=50$ Hz to be equal to the target frequencies for the first and second RC, respectively.  
Fig. \ref{2RC_1} shows that Case 1 with LPF performs better at low frequencies below 5 Hz and at the intermediate frequency at 30 Hz of the first RC and the intermediate frequency at 65 Hz of the second RC.
Case 2 with BPF, however, yields more attenuation at the designed fundamental frequencies $\omega_{d_1}=20$ Hz, $\omega_{d_2}=50$ Hz. 
Comparing the two cases, the first case, which can be proved to be stable  by the theorem, yields an
averaged performance with less suppression at the RC disturbance frequencies, but also less amplification at intermediate frequencies.
The RC gains in Case 2, however, are chosen for maximum performance without a stability check. This leads to improved performance, particularly at the fundamental RC disturbance frequencies.

\begin{figure}[H]
\centering
  \includegraphics[width=1\linewidth]{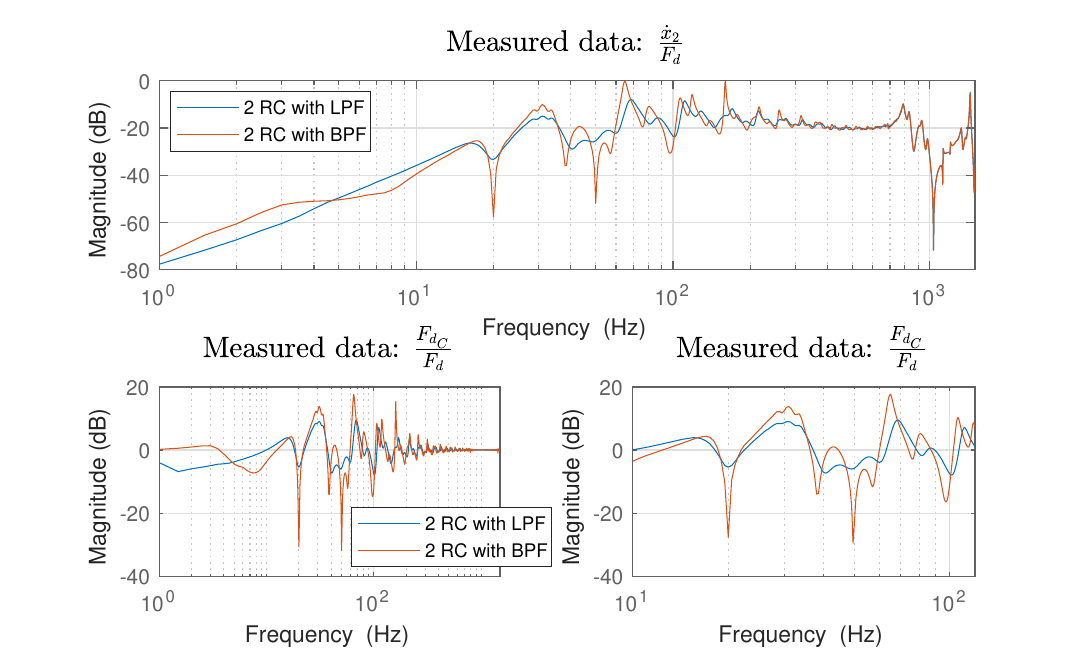}
  \caption{Proposed MPRC with LPF and BPF  for RC frequencies at 20 Hz and 50 Hz. The bottom subfigures show experimentally measured data of the compensated disturbance. The upper subfigure shows the experimentally measured transfer function of acceleration to the applied noise.
    }
\label{2RC_1}
\end{figure}

\subsubsection{Comparison to parallel and cascaded MPRC}\label{parallel}
We consider 2 BPFRC using parallel MPRC or cascaded MPRC and compare them with the proposed approach. The cascaded MPRC is implemented using the identified plant model.
Parameters are given in  Table \ref{tab:table5}.
It has been tested that before instability occurs, the maximum RC gains can be chosen as 0.97 and the maximum nominal control gain (or so-called 'skyhook gain' in \cite{Sil potential}, \cite{Wouter filter}) as  1.5 under parallel RC or cascaded RC, which is very small. 
As discussed in Remark \ref{remark2}, small loop transfer function $L=CG$  in cascaded RC will not eliminate  multi-frequency interaction. 
 In comparison, the proposed approach allows choosing larger RC gains (1.25 and 1.2) and a significantly larger skyhook gain (3000). 
The performance comparison is given in Fig. \ref{2RC_2noC}. 
It is shown that both parallel and cascaded RC lead to significant amplification at 20 Hz, even though this is the target frequency of the first RC. 
Cascaded RC  amplifies relatively less than parallel RC at this frequency, but it still fails to suppress the disturbance at this frequency. 
\begin{table}[h!]
  \begin{center}
    \caption{Parallel or cascaded RC, $\bar m=1$, $p=0$, $\omega_{d_1}=20$ Hz, $\omega_{d_2}=50$ Hz}
    \label{tab:table5}
    \begin{tabular}{|c|c|c|c|c|} 
      \hline
  Filter&   Case & skyhook gain $g$ & max $[\lambda_1,\,\lambda_2]_{\textrm{Testbed}} $ &  $\omega_{c_1},\,\omega_{c_2}$ (Hz)        \\
      \hline
BPF& Parallel \& Cascaded&1.5 &0.97, 0.97& $\begin{matrix}\textrm{40, 10}\\ \textrm{80, 10}\end{matrix} $ \\
      \hline  
      BPF& Case 2 in Table \ref{tab:table3_1} &3000&1.25, 1.2& $\begin{matrix}\textrm{40, 10}\\ \textrm{80, 10}\end{matrix} $ \\
\hline
    \end{tabular}
  \end{center}
\end{table}
\begin{figure}[h!]
\centering
  \includegraphics[width=1\linewidth]{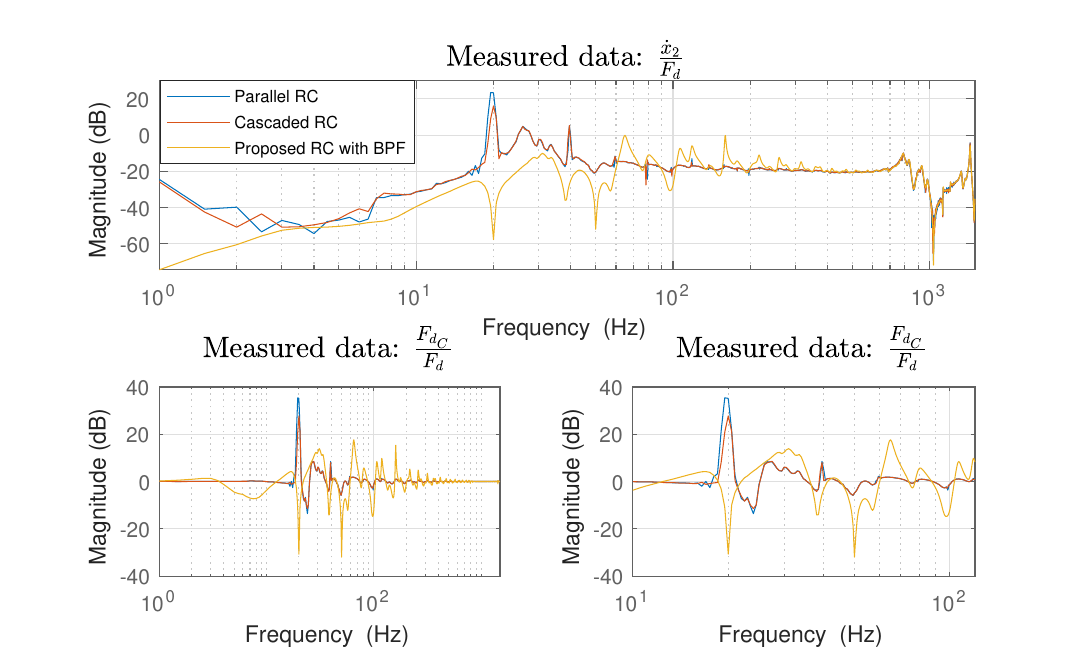}
  \caption{Comparison of parallel MPRC and cascaded MPRC with the proposed MPRC. The bottom subfigures show experimentally measured data of the compensated disturbance. The upper subfigure shows the experimentally measured transfer function of acceleration to the applied noise. }
  \label{2RC_2noC}
\end{figure}

In comparison, the proposed MFRC leads to desirable suppression at its designed target frequencies of 20 Hz and 50 Hz, as well as their harmonics, despite the use of a significantly high nominal control gain.  The significantly larger nominal control in the proposed RC allows suppression of the system's resonance at 25 Hz and increased RC gains at the target frequencies.  
We note that we do not intentionally choose a smaller skyhook gain for the Parallel and Cascaded RC in favour of a comparison to the proposed approach, which uses a significantly larger skyhook gain. The values of $\lambda_1$ and $\lambda_2$ in the Parallel and Cascaded RC are also smaller than those in the proposed design; however, this does not allow a larger skyhook gain to be chosen. 
It is not necessary to choose such a large nominal control gain in the proposed scheme if the RC gains are chosen smaller. We show that the proposed scheme allows for an increase in the nominal control gain that is much larger than that of Cascaded MPRC. This reveals that both Parallel RC and Cascaded RC may exhibit stability issues when using a larger nominal control gain, which is needed to eliminate multi-frequency interactions.

 \subsubsection{Proposed 2 RC with optimization at non-repetitive frequency}RC with TDC using the identified frequency. The main figure shows experimentally measured data of the compensated disturbance for each frequency. The subplot shows the experimentally measured transfer function of acceleration to the applied  disturbance at each frequency.
 We first design a multi-period RC for two fundamental disturbance frequencies at 50 Hz and 140 Hz, as shown in the first case of Table \ref{tab:table3}. Then we optimize the RC gains $\lambda_1,\,\lambda_2$ so that, in addition to the two fundamental disturbance frequencies, suppression at a non-repetitive frequency of 153 Hz is optimized, as in the second case in Table \ref{tab:table3}.  
 Both cases use a low-pass filter for each RC, allowing the stability condition given by LMI (\ref{LMI1}) to be applied to both. 
For the first case,
we choose the $\lambda$ values in the LMI (\ref{LMI_gamma}) with $\beta_i=0.5$, $\beta_j=0$ for maximum suppression at $\omega_{d_x}=50$ Hz and $\omega_{d_x}=140$ Hz. 
This yields the transfer function of $\frac{\dot x_2(s)}{F_d(s)}$ given by the blue line in Fig. \ref{fig:5}.

It is now of interest to improve suppression at 153 Hz in addition to 50 Hz and 140 Hz.
Our objective is to minimize the disturbance at $\omega_{d_x}=153$ Hz by minimizing $r_i$, $r_j$ in (\ref{LMI_gamma}).
We chose the smallest $r_i=0.4$ before LMI (\ref{LMI_gamma}) becomes infeasible, this yields $\beta_i=-340.8$, $\gamma_i=0.397$, $\lambda_i=1.66076$. We chose the smallest 
   $r_j=0.43$ and this yields
$\beta_j=171.7$,  $\gamma_j=0.4294$, $\lambda_j=1.0204$. 
We chose $g=3000$  for the LMIs in (\ref{LMI1}) to be feasible with the $\lambda$ values. The disturbance at 153 Hz is now suppressed more, at the expense of less suppression at 50 Hz and 140 Hz, as indicated by the red line in the figure.
\begin{table}[h!]
  \begin{center}
    \caption{2 LPFRC without/with optimization at non-repetitive frequency at 153 Hz. $\bar m=8$, $\omega_{d_1}=50$ Hz, $\omega_{d_2}=140$ Hz}
    \label{tab:table3}
    \begin{tabular}{|c|c|c|c|c|} 
      \hline
     Case & $g$ &$\tau_1,\,\tau_2$ (s)& \textrm{max} $[\lambda_1,\,\lambda_2]_{\textrm{Theorem}}$&  $\omega_{c_1},\,\omega_{c_2}$ (Hz)        \\
     \hline
      Without optimization& 1660& 0.0185, 0.0066&1.11723, 1.11803 &100, 280  \\
      \hline
With optimization & 3000& 0.0185, 0.0066&1.66076, 1.0204 &100, 280  \\
      \hline      
    \end{tabular}
  \end{center}
\end{table}
Compared to the first case, the significant $\lambda_1$ value causes less suppression at the first disturbance fundamental frequency of  50 Hz, but more suppression at 153 Hz. The decreased $\lambda_2$ leads to less suppression at the target frequency of 140 Hz, but the amplification of the RC-induced disturbance at 153 Hz is also reduced as much as possible. Therefore, the combination of the new set of $\lambda_1,\,\lambda_2$ enables 
 for an improved suppression at 153 Hz at the cost of reduced suppression at the two disturbance fundamental frequencies.

\begin{figure}
\centering
  \includegraphics[width=.7\linewidth]{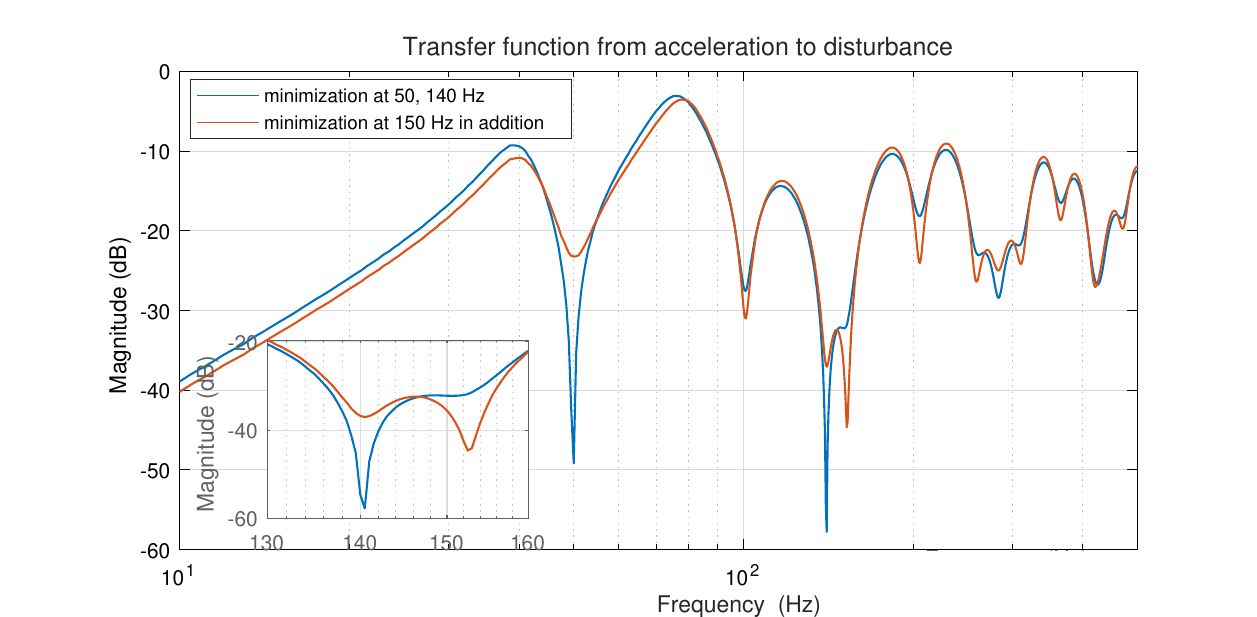}
  \caption{MPRC at 50 Hz and 140 Hz, without (blue) and with (red) considering optimization at an non-repetitive frequency at 153 Hz. The main figure shows the experimentally measured transfer function of acceleration to the applied noise. The subplot shows the zoom-in of the main figure.  }
  \label{fig:5}
\end{figure}

\subsection{RC with unknown varying frequency}\label{varying freuquency}
In this section, we demonstrate that the proposed scheme enables the estimation of an unknown periodic disturbance using TDC in the first loop, as discussed in Remark \ref{TDC remark}. We then obtain the frequency of the estimated disturbance with the assistance of an adaptive  Frequency Identifier (FI) \cite{Qing}.
The problem of unknown frequency estimation has been studied extensively using various techniques, including adaptive notch filtering, extended Kalman filter frequency estimation, the phase-locked loop technique, and adaptive identification \cite{Kang}--\cite{X.Xia}.
We select the specific FI for its simple structure, which makes implementation easier. A detailed implementation form is available in the reference.  
 The identified frequency is then
implemented as a delay for RC in the second loop. 
Control parameters and FI gain are given in Table \ref{tab:table7}.
\begin{table}[h!]
  \begin{center}
    \caption{ 1 TDC, 1 RC, 1 frequency identifier, $\bar m=3$, $g=400$ $p=0$, $\omega_{d_1}=33.3\rightarrow 25 \rightarrow 20$ Hz is unknown}
    \label{tab:table7}
    \begin{tabular}{|c|c|c|c|c|c|} 
      \hline
    Filters& Methods & $\tau_1$ (s)&$\tau_2$ (s)& $\lambda_1,\,\lambda_2$&  $\omega_{c_1},\,\omega_{c_2}$ (Hz)        \\
    \hline   
           BPF&  1 TDC, 1 FI& $2\times 10^{-4}$ & 0 & 0.9, 0&60, 1 \\
      \hline   
           BPF&  1 TDC, 1 RC, 1 FI& $2\times 10^{-4}$ & estimated & 0.9, 1.05&60, 1 \\
\hline   
    \end{tabular}
  \end{center}
\end{table}

\begin{figure}
\centering
  \includegraphics[width=1\linewidth]{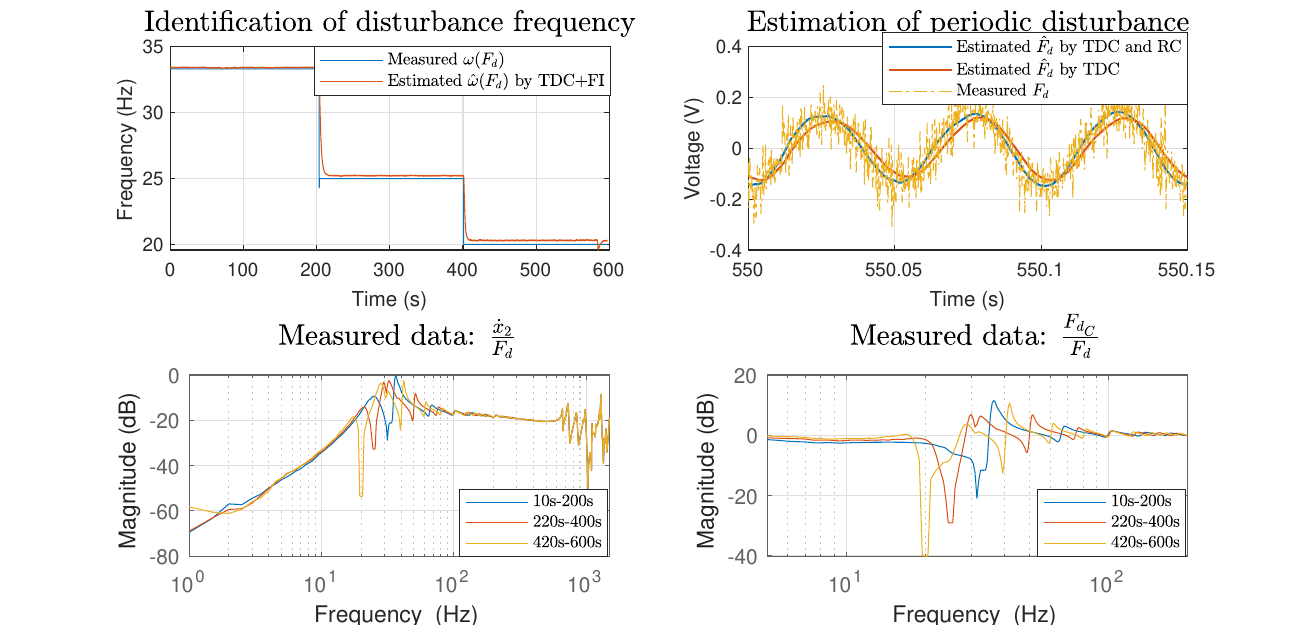}
  \caption{Experimental data: Tbe upper-left subfigure shows disturbance frequency identification using TDC and RC, with a frequency identifier in \protect\cite{Qing}. The upper-right subfigure shows the reconstructed periodic disturbance.
  The bottom-left subfigure shows the transfer function of acceleration to the applied noise disturbance. The bottom-right subfigure shows the compensated disturbance in each time duration, corresponding to their specific disturbance frequency. }
  \label{fig:Fd_iden}
\end{figure}

 Fig. \ref{fig:Fd_iden} shows the performance of disturbance estimation and frequency identification in the presence of noise in the disturbance. 
 The upper-right subfigure shows the reconstructed disturbance using the proposed approach for a disturbance frequency of 20 Hz, where the frequency estimation error is the largest. 
 Formulas used for reconstructing the disturbance estimate $\hat F_d$ by TDC with $n=1$, $\tau_1=2\times 10^{-4}$s via FI is given by $F_R$ in (\ref{F1_phi}) which is mathematically expressed as the equivalent disturbance in  (\ref{s_22}) as $\hat F_{d_{TDC}}=-F_R(n=1)$. 
 Formulas used for reconstructing the disturbance estimate $\hat F_d$ by TDC+RC with $n=2$ and $\tau_2$ obtained via FI is given by $\hat F_{d_{TDC+RC}}=-F_R(n=2)$. We have not proved the stability of this section, but $\tau_2$ needs to be bounded at least for stability. 
The disturbance can be reconstructed despite the noise in the disturbance, as the BPF in the TDC and the RC filters remove it.
The reconstructed disturbance $\hat F_{d_{TDC}}$ by TDC has a smaller amplitude than the actual disturbance because $\lambda_1=0.9$ is less than 1   and lags behind the actual disturbance. 
The phase lag is caused by both the sampling delay in the TDC and the phase lag from the BPF. 
However, the disturbance frequency can still be identified despite the phase lag.  

The upper-left subfigure shows the performance of the FI based on the disturbance reconstructed by TDC. The frequency estimation error increases for lower frequencies.
Since RC+TDC allows for increasing RC robustness to frequency uncertainties, as demonstrated in Fig. \ref{fig:3}, the performance of RC may not be severely affected in the presence of this level of error. 
The disturbance reconstructed by TDC and RC together in the upper-right subfigure then shows a more accurate match to the magnitude and phase of the actual disturbance, despite the estimation error being the largest at this frequency.   
This indicates that RC uses the disturbance frequency identified from the TDC-reconstructed disturbance, with a phase lag. Then it recovers the phase lag in the reconstructed disturbance as it is applied in addition to TDC.  
So, RC improves TDC by obtaining a more accurate estimate of the periodic disturbance, which RC then uses to compensate for the actual disturbance. 
 The bottom subfigures show the performance of TDC+RC  for each frequency of the varying disturbance frequency. 
The level of suppression due to RC+TDC reduces as the disturbance frequency increases, because of the constant RC gains and the filter cutoff frequency at 60 Hz. 
There is a flat region of suppression between 19.5 Hz and 20.5 Hz at -43.3 dB (the yellow line), and between 24.5 Hz and 25.5 Hz at -29.1 dB (the red line). This further confirms that TDC yields more robustness to RC. As a result, the combination of RC and TDC is robust against the error of the frequency identification to yield a more accurate disturbance reconstruction than TDC, as discussed.

\section{\textbf{Conclusion}}
We have shown that the RC and TDC disturbance observers have identical structures, except for the delays used. For RC, we choose the delay to be the period of the periodic disturbance. For TDC, we set the delay equal to the sampling period. By simply adding each single period RC in a nested fashion to the nominal control,
 the proposed approach can leverage both RC and TDC disturbance observers to allow multi-frequency RC,  as well as robust RC against frequency uncertainties. Comparison with a high-order 
 RC shows that our proposed approach increases robustness to frequency uncertainties while amplifying less at intermediate frequencies.

We show that the proposed MPRC has advantages over Parallel or Cascaded MPRC, as it does not require a high nominal control gain, as assumed by Cascaded MPRC, to reduce multi-frequency interaction. We experimentally scrutinized Cascaded and Parallel MPRC; the schemes restrict us from choosing larger nominal control gains before the system becomes unstable in our setup. 
The instability is purely due to the undesirable amplification of disturbance  at 20 Hz resulting from multi-frequency RC interaction, rather than parasitic dynamics at high frequencies in our setup.
 We proposed modifications to the Cascaded MPRC that do not require an identified plant model and show that this is identical to the proposed nested architecture. 

We present the closed-loop system in state-space form with multiple interacting delays and RC-induced disturbances. 
 We derived ultimate bounded solutions for the proposed MPRC framework with a low-pass or band-pass filter via input-to-state stability analysis. 
 This result can guide the tuning of the filters, RC, and nominal controller parameters to achieve predefined bounded performance at the designed frequencies.
 Our stability result accommodates optimization at both repetitive and non-repetitive frequencies. 
We conduct extensive experiments to verify the proposed framework's aforementioned functionalities. 
As part of the investigation, we also experimentally examined whether the proposed approach can leverage the features of TDC and RC to estimate and reject unknown periodic disturbances.

\appendix
\setcounter{equation}{0}
\renewcommand{\theequation}{A\arabic{equation}}

\subsection{State-space formulation for $n=3$}\label{Appen_A}
Following the  derivation for  $n=2$ case,  
we can extend the state-space formulation for $n=3$. The closed-loop system (\ref{s6_2}) becomes
\begin{equation}
\overline m\dot x_2=(1-Q_1(s)\lambda_1e^{-s\tau_1})(1-Q_2(s)\lambda_2e^{-s\tau_2})(1-Q_3(s)\lambda_3e^{-s\tau_3})   \Phi_1(s)-px_1-gx_2.
\end{equation}
Taking a Laplace expansion, we have
\begin{equation}
\begin{array}{l}
(s+\omega_{c_1})\big((s+\omega_{c_2})(s+\omega_{c_3})(ms^2+(d+g)s+k+p)X_1(s)-F_d(s)\big)\\
=\big(-\omega_{c_1}\lambda_1(s+\omega_{c_2})\\
\cdot (s+\omega_{c_3})e^{-s\tau_1}-\omega_{c_2}\lambda_2(s+\omega_{c_1})(s+\omega_{c_3})e^{-s\tau_2}-\omega_{c_3}\lambda_3(s+\omega_{c_1})(s+\omega_{c_2})e^{-s\tau_3}\\+\omega_{c_1}\omega_{c_2}\lambda_1\lambda_2(s+\omega_{c_3})e^{-s(\tau_1+\tau_2)}
+\omega_{c_1}\omega_{c_3}\lambda_1\lambda_3(s+\omega_{c_2})e^{-s(\tau_1+\tau_3)}\\+\omega_{c_2}\omega_{c_3}\lambda_2\lambda_3(s+\omega_{c_1})e^{-s(\tau_2+\tau_3)}\\-\omega_{c_1}\omega_{c_2}\omega_{c_3}\lambda_1\lambda_2\lambda_3e^{-s(\tau_1+\tau_2+\tau_3)}\big)\Phi_1(s).
\end{array}
\end{equation}
Denoting $D_m=\textrm{diag}\{1,\,1,\,1,\,1,\,m\} $, $x=\textrm{col}\{x_1,\,x_2,\,x_3,\,x_4,\,x_5\}  $ and \scalebox{0.8}{$F_{d_s}=\textrm{col}\{F_d,\,\dot F_d,\,\ddot F_d,\, \dddot F_d\}$},
the state-space form becomes
\begin{equation}\label{state 3 loop}
\resizebox{0.98\linewidth}{!}{$
\begin{array}{l}
D_m\dot x=Ax+A_1x(t-\tau_1)+A_2x(t-\tau_2)+A_3x(t-\tau_3)+A_4x(t-\tau_1-\tau_2)\\
+A_5x(t-\tau_1-\tau_3)+A_6x(t-\tau_2-\tau_3)+A_7x(t-\tau_1-\tau_2-\tau_3)\\
+Bx+B_1F_{d_s}(t-\tau_1)+B_2F_{d_s}(t-\tau_2)+B_3F_{d_s}(t-\tau_3)+B_4F_{d_s}(t-\tau_1-\tau_2)\\
+B_5F_{d_s}(t-\tau_1-\tau_3)+B_6F_{d_s}(t-\tau_2-\tau_3)+B_7F_{d_s}(t-\tau_1-\tau_2-\tau_3).
\end{array}
$}
\end{equation}
Defining $\bar A=A+\sum_{i=1}^7A_i$, $\bar B=B+\sum_{i=1}^7B_i$,
\begin{equation}
B_s=\begin{pmatrix}
\bar B&-B_1&-B_2&-B_3&-B_4&-B_5&-B_6&-B_7
\end{pmatrix}
\end{equation}
and
\begin{equation}
\begin{split}
\overline F_{d}=\textrm{col}\{&F_{d_s},\, \int_{t-\tau_1}^t\dot F_{d_s}(\mu)\textrm{d}\mu,\, \int_{t-\tau_2}^t\dot F_{d_s}(\mu)\textrm{d}\mu,\, \int_{t-\tau_3}^t\dot F_{d_s}(\mu)\textrm{d}\mu,\\
&\int_{t-\tau_1-\tau_2}^t\dot F_{d_s}(\mu)\textrm{d}\mu,\,
\int_{t-\tau_1-\tau_3}^t\dot F_{d_s}(\mu)\textrm{d}\mu,\,
\int_{t-\tau_2-\tau_3}^t\dot F_{d_s}(\mu)\textrm{d}\mu,\\
&\int_{t-\tau_1-\tau_2-\tau_3}^t\dot F_{d_s}(\mu)\textrm{d}\mu\}
\end{split}
\end{equation}
, 
the transformed system becomes
\begin{equation}\label{system}
\begin{array}{l}
D_m\dot x=\bar Ax-A_1\int_{t-\tau_1}^t\dot x(\mu)\textrm{d}\mu-A_2\int_{t-\tau_2}^t\dot x(\mu)\textrm{d}\mu-A_3\int_{t-\tau_3}^t\dot x(\mu)\textrm{d}\mu-A_4\int_{t-\tau_1-\tau_2}^t\dot x(\mu)\textrm{d}\mu\\-A_5\int_{t-\tau_1-\tau_3}^t\dot x(\mu)\textrm{d}\mu-A_6\int_{t-\tau_2-\tau_3}^t\dot x(\mu)\textrm{d}\mu-A_7\int_{t-\tau_1-\tau_2-\tau_3}^t\dot x(\mu)\textrm{d}\mu+B_s\overline F_{d}
\end{array}
\end{equation}
Following matrix parameters can be constructed as:
\begin{equation}
\resizebox{0.98\linewidth}{!}{$
\begin{array}{l}
\begin{pmatrix}
 A_{51}\\  \vdots \\  A_{55}
\end{pmatrix}=-\begin{pmatrix}\underbrace{\begin{pmatrix}
k&0&0&0\\d&k&0&0\\\Delta m&d&k&0\\0&\Delta m&d&k\\0&0&\Delta m&d
\end{pmatrix}}_K+\begin{pmatrix}
p&0&0&0\\g&p&0&0\\\bar m&g&p&0\\0&\bar m&g&p\\0&0&\bar m&g
\end{pmatrix}\end{pmatrix}
\underbrace{\begin{pmatrix}
\omega_{c_1}\omega_{c_2}\omega_{c_3}\\
\omega_{c_1}\omega_{c_2}+\omega_{c_1}\omega_{c_3}+\omega_{c_2}\omega_{c_3}\\ \omega_{c_1}+\omega_{c_2}+\omega_{c_3}\\1
\end{pmatrix}}_{\bar \omega}\\
\begin{pmatrix}
A_{1_{51}}\\ \vdots \\A_{1_{55}}
\end{pmatrix}
=\omega_{c_1}\lambda_1K(:,1:3)
\underbrace{
\begin{pmatrix}
\omega_{c_2}\omega_{c_3}\\ \omega_{c_2}+\omega_{c_3}\\1
\end{pmatrix}}_{\bar \omega_{23}}
,\,\begin{pmatrix}
A_{2_{51}}\\ \vdots \\A_{2_{55}}
\end{pmatrix}
=\omega_{c_2}\lambda_2K(:,1:3)
\underbrace{
\begin{pmatrix}
\omega_{c_1}\omega_{c_3}\\ \omega_{c_1}+\omega_{c_3}\\1
\end{pmatrix}}_{\bar \omega_{13}},
\\
\begin{pmatrix}
A_{3_{51}}\\ \vdots \\A_{3_{55}}
\end{pmatrix}
=\omega_{c_3}\lambda_3K(:,1:3)
\underbrace{
\begin{pmatrix}
\omega_{c_1}\omega_{c_2}\\ \omega_{c_1}+\omega_{c_2}\\1
\end{pmatrix}}_{\bar \omega_{12}},\,
\begin{pmatrix}
A_{4_{51}}\\ \vdots \\A_{4_{55}}
\end{pmatrix}
=-\omega_{c_1}\omega_{c_2}\lambda_1\lambda_2 K(:,1:2)\underbrace{\begin{pmatrix}
 \omega_{c_3}\\1
\end{pmatrix}}_{\bar \omega_3},\\
\begin{pmatrix}
A_{5_{51}}\\ \vdots \\A_{5_{55}}
\end{pmatrix}
=-\omega_{c_1}\omega_{c_3}\lambda_1\lambda_3 K(:,1:2)\underbrace{\begin{pmatrix}
 \omega_{c_2}\\1
\end{pmatrix}}_{\bar \omega_2},\,
\begin{pmatrix}
A_{6_{51}}\\ \vdots \\A_{6_{55}}
\end{pmatrix}
=-\omega_{c_2}\omega_{c_3}\lambda_2\lambda_3 K(:,1:2)\underbrace{\begin{pmatrix}
 \omega_{c_1}\\1
\end{pmatrix}}_{\bar \omega_1},
\end{array}
$}
\end{equation}
\begin{equation}
\resizebox{0.98\linewidth}{!}{$
\begin{array}{l}
\begin{pmatrix}
A_{7_{51}}\\ \vdots \\A_{7_{55}}
\end{pmatrix}
=\omega_{c_1}\omega_{c_2}\omega_{c_3}\lambda_1\lambda_2\lambda_3 K(:,1),\, 
\begin{pmatrix}
B_{51}\\ \vdots \\ B_{54}
\end{pmatrix}=\bar \omega,\, \begin{pmatrix}
B_{1_{51}}\\ \vdots \\ B_{1_{54}}
\end{pmatrix}=-\omega_{c_1}\lambda_1 I_4(:,1:3)\bar \omega_{23},\\
\begin{pmatrix}
B_{2_{51}}\\ \vdots \\ B_{2_{54}}
\end{pmatrix}=-\omega_{c_2}\lambda_2 I_4(:,1:3)\bar \omega_{13},\, \begin{pmatrix}
B_{3_{51}}\\ \vdots \\ B_{3_{54}}
\end{pmatrix}=-\omega_{c_3}\lambda_3 I_4(:,1:3)\bar \omega_{12},\, \begin{pmatrix}
B_{4_{51}}\\ \vdots \\ B_{4_{54}}
\end{pmatrix}=\omega_{c_1}\omega_{c_2}\lambda_1\lambda_2 I_4(:,1:2)\bar \omega_{3}\\
\begin{pmatrix}
B_{5_{51}}\\ \vdots \\ B_{5_{54}}
\end{pmatrix}=\omega_{c_1}\omega_{c_3}\lambda_1\lambda_3 I_4(:,1:2)\bar \omega_{2},\, 
\begin{pmatrix}
B_{6_{51}}\\ \vdots \\ B_{6_{54}}
\end{pmatrix}=\omega_{c_2}\omega_{c_3}\lambda_2\lambda_3 I_4(:,1:2)\bar \omega_{1},\\
\begin{pmatrix}
B_{7_{51}}\\ \vdots \\ B_{7_{54}}
\end{pmatrix}=-\omega_{c_1}\omega_{c_2}\omega_{c_3}\lambda_1\lambda_2\lambda_3 I_4(:,1).
\end{array}
$}
\end{equation}
We have just derived the matrices in (\ref{state 3 loop}). This procedure can be followed for  $n>3$.

\end{document}